\def\inprob{\stackrel{p}{\rightarrow}}
\def\indist{\rightsquigarrow}
\def\T{{ \mathrm{\scriptscriptstyle T} }}
\newcommand{\var}{\text{var}}
\newcommand{\cov}{\text{cov}}
\newcommand{\Pb}{\mathbb{P}}
\newcommand{\Pn}{\mathbb{P}_n}
\newcommand{\E}{\mathbb{E}}
\newcommand{\R}{\mathbb{R}}
\newcommand{\IF}{\mathbb{IF}}
\newcommand{\Holder}{\text{H\"{o}lder}}
\DeclareMathOperator*{\argmin}{arg\,min}
\DeclareSymbolFont{bbold}{U}{bbold}{m}{n}
\DeclareSymbolFontAlphabet{\mathbbold}{bbold}
\newcommand{\one}{\mathbbold{1}}
\newtheorem{theorem}{Theorem}
\newtheorem{lemma}{Lemma}
\newtheorem{proposition}{Proposition}
\theoremstyle{definition}
\newtheorem{definition}{Definition}
\newtheorem{example}{Example}
\theoremstyle{remark}
\newtheorem{remark}{Remark}
\begin{document}

\def\spacingset#1{\renewcommand{\baselinestretch}%
{#1}\small\normalsize} \spacingset{1}

\renewcommand\thmcontinues[1]{continued}

\raggedbottom
\allowdisplaybreaks[1]

  
  \title{\vspace*{-.6in} 
  {Semiparametric Doubly Robust  Targeted \\ Double Machine Learning: A Review 
  \footnote{The title is somewhat tongue-in-cheek, reflecting a lack of unified terminology in this area.}
  \footnote{This review largely follows material from a workshop I started giving in 2016 (the first was in March 2016 for the UNC Causal Inference Research Group). Slides can be found \href{http://www.ehkennedy.com/uploads/5/8/4/5/58450265/unc_2019_cirg.pdf}{here}.}}
  }
  \author{\\ Edward H. Kennedy \\ 
    Department of Statistics \& Data Science \\
    Carnegie Mellon University \\ \\ 
    \texttt{edward@stat.cmu.edu} 
    \date{}
    }

  \maketitle
  \thispagestyle{empty}



\vspace*{-.15in}

\tableofcontents

\vspace{.35in}

\section{Introduction}
\label{sec:setup}

In this review we cover the basics of  efficient nonparametric parameter estimation (also called functional estimation), with a focus on parameters that arise in causal inference problems. We review both efficiency bounds (i.e., what is the best possible performance for estimating a given parameter?) and the analysis of particular estimators (i.e., what is this estimator's error, and does it attain the efficiency bound?) under weak assumptions. \\

We consider the standard setup for functional estimation  problems. Namely, we suppose we observe a sample of independent observations $(Z_1,...,Z_n)$ all identically distributed according to some unknown probability distribution $\Pb$, which is assumed to lie in some model (i.e., set of distributions) $\mathcal{P}$. Importantly, the goal is \emph{not} to estimate all of $\Pb$, or even an individual component of $\Pb$ such as a regression or density function. Instead the goal is to estimate some structured combination of components, called a  \emph{target parameter} or \emph{functional} $\psi: \mathcal{P} \mapsto \R^q$.  A functional can be viewed as a map from the model to some space, which we take as the reals for simplicity (we will often focus on $q=1$, since extensions to the multivariate $q \geq 2$ setup are typically straightforward). \\

We will see throughout this review that the special structure of functionals, being combinations of components of $\Pb$, endows this estimation problem with many interesting nuances. For example, fast $\sqrt{n}$ rates of convergence can be achieved in nonparametric models, in stark contrast to the problems of nonparametric regression or density estimation. \\

By now, there are many reviews and resources available on the topic of  parameter estimation in modern flexible nonparametric models, e.g., 
\citet{bickel1993efficient}, \citet{van2000asymptotic}, \citet{van2002semiparametric}, \citet{van2003unified}, \citet{tsiatis2006semiparametric}, \citet{kosorok2008introduction}, \citet{van2011targeted}, \citet{petersen2014causal}, \citet{kennedy2016semiparametric}, \citet{chernozhukov2018double}, \citet{kennedy2018semiparametric}, \citet{diaz2020machine}, and \citet{hines2022demystifying}, among others. In this review, we put special emphasis on minimax-style efficiency bounds, worked examples, and  practical shortcuts for easing derivations. We gloss over most technical details,  in the interest of highlighting     important concepts and providing intuition  for main ideas. \\

\subsection{Notation}

We write counterfactual outcomes as $Y^a$, i.e., the value of  $Y$ that would have been observed had we set $A=a$. At times we subscript expectations and other quantities with the distribution under which they are taken, i.e., $\E_P(Y \mid X=x)$ for an expectation under distribution $P$. When the distribution is clear from context, we sometimes omit subscripts; in general, quantities without subscripts are meant to be taken under some generic $P$ in the model, or else under the true distribution $\Pb$. We denote convergence in distribution by $\indist$ and convergence in probability by $\inprob$. We use standard big-oh and little-oh notation, i.e., $X_n = O_\Pb(r_n)$ means $X_n/r_n$ is bounded in probability and $X_n = o_\Pb(r_n)$ means $X_n/r_n \inprob 0$. To ease notation we sometimes omit arguments for functions of multiple arguments, e.g., $\varphi = \varphi(z;P)$ when the arguments are clear or secondary to the discussion. We use $\Pn$ to denote the empirical measure so that sample averages are written as $\Pn(f) = \Pn\{f(Z)\} = \frac{1}{n} \sum_i f(Z_i)$. For a possibly random function $\widehat{f}$, we similarly write $\Pb(\widehat{f})=\Pb\{\widehat{f}(Z)\} = \int \widehat{f}(z) \ d\Pb(z)$, and we let $\| \widehat{f} \|^2 = \int \widehat{f}(z)^2 \ d\Pb(z)$ denote the squared $L_2(\Pb)$ norm.   \\

\section{Setup: Target Parameters \& Model Assumptions}

\subsection{Examples of Functionals}
\label{sec:functionalexamples}

Here we give a list of examples of functionals, some of which arise from causal inference problems, and some of which do not. Many (but certainly not all) functionals in causal inference and missing data take the form of regression functions averaged over covariate distributions.  \\

\begin{example}[label=ex:reg](Regression function)
Suppose $Z=(X,Y)$.  The regression of $Y$ on $X$ is given by $\psi=\psi(x) = \E(Y \mid X=x)$.
\end{example}

\medskip

\begin{example}[label=ex:ate](Average treatment effect) 
Suppose $Z=(X,A,Y)$ for confounders $X$, treatment $A$, and outcome $Y$. Then under causal positivity, consistency, and no unmeasured confounding assumptions, the mean outcome if all in the population were treated at level $A=a$ is identified as the expected regression function
$$ \psi = \E(Y^a) = \E\{\E(Y \mid X, A=a) \} , $$
where we use the convention $\E(Y \mid X, A=a) \equiv \mu_a(X)$ for $\mu_a(x) = \E(Y \mid X=x,A=a)$, so the outer expectation in the above is over the marginal distribution of $X$. 
Corresponding contrasts are identified as $\E(Y^1-Y^0) = \E\{ \E(Y \mid X, A=1) - \E(Y \mid X, A=0) \}$, for example. 
\end{example}

\medskip

\begin{remark}[Identifying Assumptions] \label{rem:identify}
Since the focus of this work is on statistical estimation and inference, rather than identification, we only briefly mention causal identifying assumptions, as in the previous example.  The discussion in subsequent sections holds for the parameters of interest defined purely in statistical terms, regardless of whether the causal assumptions hold (excepting statistical assumptions like positivity, which we also mostly gloss over). 
\end{remark}

\medskip

\begin{example}(Mean missing outcome)
Suppose $Z=(X,A,AY)$ for covariates $X$, missing indicator $A$, and outcome $Y$ (which is only observed when $A=1$). Then under positivity and missing at random assumptions, the mean outcome in the population is identified as the expected regression function
$$ \psi = \E(Y) = \E\{ \E(Y \mid X,A=1)\} . $$
This is mathematically equivalent to the mean outcome if all were treated at level $A=1$ from the previous example, and so statistical methods are identical.
\end{example}

\medskip

\begin{example}[label=ex:cov](Variance-weighted treatment effect)
Suppose $Z=(X,A,Y)$ for confounders $X$, treatment $A$, and outcome $Y$. Then under causal positivity, consistency, and no unmeasured confounding assumptions, a variance-weighted average treatment effect is identified as
$$ \psi = \E\{ w(X) \E(Y^1-Y^0 \mid X)\} = \frac{ \E\{ \cov(A,Y \mid X)\}}{\E\{ \var(A \mid X) \}}  $$
for weights $w(X) = \var(A \mid X) / \E\{ \var(A \mid X)\}$ \citep{robins2008higher, li2011higher}. 
\end{example}

\medskip

\begin{example}[label=ex:stochastic](Stochastic intervention effect)
Suppose $Z=(X,A,Y)$ for confounders $X$, treatment $A$, and outcome $Y$. Then under positivity, consistency, and no unmeasured confounding assumptions, the mean outcome if treatments were sampled as $A \sim dG(a \mid x)$ for everyone in the population is identified as
$$ \psi = \E\{ \E(Y \mid X, A^*)\} = \int \int \E(Y \mid X=x,A=a) \ dG(a \mid x) \ d\Pb(x) . $$
We refer to \citet{diaz2012population, haneuse2013estimation, young2014identification}, and \citet{kennedy2019nonparametric} for further discussion. 
\end{example}

\medskip

\begin{example}[label=ex:iv](Instrumental variable effects)
Suppose $Z=(X,R,A,Y)$ for confounders $X$, instrumental variable (IV) $R$, treatment $A$, and outcome $Y$. Then under positivity, consistency, IV-unconfoundedness, exclusion, instrumentation, and monotonicity assumptions, the average treatment effect among compliers with $A^{r=1}>A^{r=0}$ is given by
$$ \psi = \E(Y^{a=1} - Y^{a=0} \mid A^{r=1} > A^{r=0}) = \frac{ \E\{ \E(Y \mid X, R=1) - \E(Y \mid X, R=0) \}}{\E \{ \E(A \mid X, R=1) - \E(A \mid X, R=0)\} } . $$
Replacing monotonicity with an effect homogeneity assumption, the same statistical functional instead represents an effect on the treated \citep{ogburn2015doubly}. Under a different effect homogeneity assumption, the related but different ratio estimand
$$ \psi = \E\left\{ \frac{   \E(Y \mid X, R=1) - \E(Y \mid X, R=0) }{  \E(A \mid X, R=1) - \E(A \mid X, R=0) } \right\} $$
identifies the average treatment effect \citep{wang2018bounded}.
\end{example}

\medskip

\begin{example}[label=ex:gformula](Time-varying treatment effects)
Suppose $Z=(X_1,A_1,...,X_t,A_t,...,X_T,A_T,Y)$ for time-varying confounders $X_t$, treatments $A_t$, and final outcome $Y$. Then under consistency, with sequential versions of positivity and no unmeasured confounding assumptions, the mean outcome if all in the population followed treatment sequence $\overline{a}_T=(a_1,...,a_T)$ is identified as
\begin{align*}
\psi = \E(Y^{\overline{a}_T}) &= \int \cdots \int \E(Y \mid \overline{X}_T=\overline{x}_T, \overline{A}_T=\overline{a}_T) \prod_{t=1}^T d\Pb(x_t \mid \overline{x}_{t-1}, \overline{a}_{t-1})  .
\end{align*}
This is known as Robins' g-formula \citep{robins1986new, robins2009estimation, van2003unified}. The projection of this quantity  (as a function of $\overline{a}_T$) onto an approximating \emph{marginal structural model} $g(\overline{a}_T;\beta)$ is given by
$$ \psi = \argmin_{\beta \in \R^p}  \int w(\overline{a}_T) \Big\{ \E(Y^{\overline{a}_T}) - g(\overline{a}_T;\beta) \Big\}^2 \ d\nu(\overline{a}_T) $$
where $\E(Y^{\overline{a}_T})$ is identified via the expression above, $w$ is a specified weight function, and $\nu$ is a dominating measure for the distribution of $\overline{A}_T$. 
\end{example}

\medskip

\begin{example}(Mediation effects)
Suppose $Z=(X,A,M,Y)$ for confounders $X$, treatment  $A$, mediator $M$, and outcome $Y$. Then under positivity, consistency, and no unmeasured confounding assumptions for $(A,M)$, the controlled direct effect of treatment $A$, keeping the mediator fixed at $M=m$, is identified by
$$ \E(Y^{a,m}-Y^{a',m}) = \E\{ \E(Y \mid X,A=a,M=m) - \E(Y \mid X,A=a',M=m) \} . $$
Indirect effects can be identified analogously \citep{pearl2009causality, imai2010identification, tchetgen2012semiparametric}. The natural direct effect of treatment, when the mediator is set to what it would have been under $A=a$, is identified by
$$ \E(Y^{a,M^a}-Y^{a',M^a}) = \E\{ \E(Y \mid X,A=a)\} - \E \int \E(Y \mid X,A=a',M=m) \ d\Pb(m \mid X, A=a)  . $$
with indirect effects again identified similarly. Natural mediation effects require weaker positivity assumptions than controlled effects. 
\end{example}

\medskip

\begin{example}(Treatment effect bounds)
Suppose $Z=(X,A,Y)$ for confounders $X$, treatment $A$, and outcome $Y$. Then under consistency and positivity assumptions, the average treatment effect is bounded within
$$ \psi = [\psi_\ell,\psi_u] = \E\{ \E(Y \mid X,A=a)\} \pm \delta \Pb(A \neq a)  $$
as long as  $|\E(Y^a \mid X,A=a) - \E(Y^a \mid X, A \neq a)| \leq \delta$ (note this is weaker than no unmeasured confounding, which implies $\delta=0$) \citep{richardson2014nonparametric, luedtke2015statistics}. 
\end{example}

\medskip

\begin{example}[label=ex:expdens](Expected density)
Let $Z$ have density $p$. Then the expected density is
$$ \psi = \E\{ p(Z) \} = \int p(z)^2 \ dz . $$
This functional is a staple of the classical functional estimation literature and arises in tuning parameter selection in density estimation \citep{bickel1988estimating, birge1995estimation}.
\end{example}

\medskip

\begin{example}(Entropy)
Let $Z$ have density $p$.  Then the entropy is 
$$ \psi = -\int p(z) \log p(z) \ dz = - \E\{ \log p(Z) \} . $$
\end{example}

\medskip

\begin{example}($f$-divergence)
Let $Z=(A,Y)$ for $A \in \{0,1\}$ a group indicator and $Y$ a random variable with conditional density $p(y \mid a)$. Then the $f$-divergence of $p(y \mid a=1)$ from $p(y \mid a=0)$ is
$$ \psi = \int f\left( \frac{p(y \mid a=1)}{p(y \mid a=0)} \right) p(y \mid a=0) \ dy $$
for a known function $f$. Particular choices of $f$ yield Kullback-Leibler, Hellinger, total variation, and $\chi^2$ distances, for example \citep{kandasamy2015nonparametric}. 
\end{example}

\bigskip

\subsection{Model Assumptions}

Recall from Section \ref{sec:setup} that the distribution $\Pb$ from which we sample is assumed to lie in a model, i.e., set of distributions $\mathcal{P}$.  In this review we focus on nonparametric models;   in Section \ref{sec:effbd} we typically take $\mathcal{P}$ to be the simplest nonparametric model, consisting of all probability distributions on the sample space, while in Section \ref{sec:methods} we introduce models with smoothness or sparsity. We focus on nonparametric rather than semiparametric models mostly for simplicity;  many ideas extend to the more restricted semiparametric  case, and we refer to  \citet{bickel1993efficient},  \citet{van2003unified}, and \citet{tsiatis2006semiparametric} for more details there. As noted in Remark \ref{rem:identify}, we only briefly mention identifying assumptions, despite their importance, since the focus of this review is on the statistical aspects of causal inference, post-identification. \\

\section{Benchmarks: Nonparametric Efficiency Bounds}
\label{sec:effbd}

After having selected an appropriate target parameter $\psi$ matching the scientific question of interest, identifying (or bounding) it under appropriate causal or other assumptions, and laying out a statistical model $\mathcal{P}$ (which in our case will be nonparametric), a next line of business is to understand \emph{lower bounds} or \emph{benchmarks} for estimation error. In other words, how well can we possibly hope to estimate the parameter $\psi$ over the model $\mathcal{P}$? This is important both theoretically, as a fundamental measure of the statistical difficulty of estimating $\psi$, as well as practically, since it helps tell us whether a particular method is optimally efficient, making the best use of the data (if not, one may need to search for better, more efficient methods). Note there are two parts to showing optimality: (i)  that no estimator can do better than some benchmark, and (ii) that a particular estimator does in fact attain that benchmark. Part (i) is discussed in this section, and part (ii) in the next section.   \\

A classic benchmarking or lower bound result for smooth parametric models is the Cramer-Rao bound \citep{casella2001statistical, van2002semiparametric}. In its simplest form, this result states that for smooth parametric models $\mathcal{P} = \{ P_\theta : \theta \in \R \}$  and smooth functionals (i.e., with $P_\theta$ and $\psi(\theta)$ differentiable in $\theta$), the variance of \emph{any} unbiased estimator $\widehat\psi$ must satisfy
\begin{equation} \label{eq:crao}
\var_\theta(\widehat\psi) \geq \frac{\psi'(\theta)^2}{ \var_\theta\{s_\theta(Z) \}} , 
\end{equation}
where $s_\theta(z) = \frac{\partial}{\partial\theta} \log p_\theta(z)$ is the score function, i.e., no unbiased estimator can have smaller variance than the above ratio. A standard way to benchmark estimation error more generally is through minimax lower bounds of the form
\begin{equation} \label{eq:minimax}
 \inf_{\widehat\psi} \sup_{P \in \mathcal{P}} \E_P\Big[ \{ \widehat\psi - \psi(P) \}^2 \Big] \geq R_n . 
 \end{equation}
These kinds of lower bounds say that the risk for estimating $\psi$ (in this case, in terms of worst-case mean squared error), over the model $\mathcal{P}$, cannot be smaller than $R_n$. For example, when $\psi(P)$ is a density or regression function, and $\mathcal{P}$ is the class of all $s$-smooth \Holder{} densities, then $R_n = Cn^{-1/(1+d/2s)}$ \citep{tsybakov2009introduction}.  \\

Indeed the Cramer-Rao bound \eqref{eq:crao}  also acts  as a benchmark in a more general minimax sense. In fact, for smooth parametric models, one can go beyond global lower bounds of the form \eqref{eq:minimax} and say something about more nuanced local minimax behavior. This is illustrated in the following theorem. \\

\begin{theorem}[Theorem 8.11, \citet{van2000asymptotic}] \label{thm:craominimax}
Assume $P_\theta$ is differentiable in quadratic mean at $\theta$ with nonsingular Fisher information $I_\theta = \var_\theta\{ s_\theta(Z)\}$. If $\psi(\theta)$ is differentiable at $\theta$, with $\psi'(\theta)= \frac{\partial}{\partial \theta}\psi(\theta)$, then for any estimator $\widehat\psi$ it follows that
$$ \inf_{\delta>0} \liminf_{n \rightarrow \infty}  \sup_{\|\theta'-\theta\| < \delta} \ n\  \E_{\theta'} \Big[ \{\widehat\psi - \psi(\theta')\}^2 \Big] \ \geq \ \psi'(\theta)  \var_\theta\{s_\theta(Z) \}^{-1} \psi'(\theta)^\T . $$
\end{theorem}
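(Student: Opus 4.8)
The plan is to follow Le Cam's theory of convergence of statistical experiments, reducing the finite-sample local minimax problem to a single estimation problem in a limiting Gaussian shift experiment, where the bound can be computed by hand. First I would localize and rescale. Writing $\theta' = \theta + h/\sqrt{n}$, the shrinking neighborhood $\{\|\theta'-\theta\| < \delta\}$ corresponds to $\{h : \|h\| < \delta\sqrt{n}\}$, a ball that grows without bound as $n \to \infty$; this is precisely why the $\inf_\delta \liminf_n \sup$ ordering eventually captures any fixed compact set of local parameters $h$. Differentiability of $\psi$ gives $\sqrt{n}\{\psi(\theta') - \psi(\theta)\} \to \psi'(\theta) h$, so that setting $T_n = \sqrt{n}\{\widehat\psi - \psi(\theta)\}$ converts the rescaled risk into
$$ n \, \E_{\theta'}\big[ \{\widehat\psi - \psi(\theta')\}^2 \big] = \E_{\theta + h/\sqrt{n}}\big[ \{ T_n - \psi'(\theta) h + o(1) \}^2 \big], $$
reducing the target to the \emph{linear} functional $h \mapsto \psi'(\theta) h$.

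Second, I would invoke the local asymptotic normality implied by differentiability in quadratic mean: the log-likelihood ratios admit the expansion $\log \frac{dP_{\theta + h/\sqrt{n}}^n}{dP_\theta^n} = h^\T \Delta_n - \frac{1}{2} h^\T I_\theta h + o_{P_\theta}(1)$, where $\Delta_n = \frac{1}{\sqrt{n}} \sum_i s_\theta(Z_i) \indist N(0, I_\theta)$ by the central limit theorem and nonsingular Fisher information. By Le Cam's lemmas this says that the localized experiments $(P_{\theta + h/\sqrt{n}}^n : h \in \R^k)$ converge to the Gaussian shift experiment $\mathcal{E} = (N(h, I_\theta^{-1}) : h \in \R^k)$. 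I would then transfer the lower bound to this limit via the asymptotic representation (asymptotic minimax) theorem: for bowl-shaped loss, $\liminf_n \sup_{\|h\| \leq c} \E[\ell(T_n - \psi'(\theta) h)]$ is at least the minimax risk of $\mathcal{E}$ for estimating $\psi'(\theta) h$. Squared-error loss is bowl-shaped but unbounded, so I would first establish the bound for the truncated loss $\ell_M = \ell \wedge M$, apply the bounded-loss result, and then send $M \to \infty$ by monotone convergence; the growing radius $c = \delta\sqrt{n}$ supplies the needed uniformity.

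Finally, I would compute the minimax risk directly in the limit. Observing $X \sim N(h, I_\theta^{-1})$, the estimator $\psi'(\theta) X$ is unbiased for $\psi'(\theta) h$ with variance $\psi'(\theta) I_\theta^{-1} \psi'(\theta)^\T$, which is constant in $h$; by Anderson's lemma (equivalently, a least-favorable Gaussian-prior argument in which the prior variance is sent to infinity) no estimator can have smaller maximal risk. Hence the minimax risk of $\mathcal{E}$ equals exactly $\psi'(\theta) I_\theta^{-1} \psi'(\theta)^\T$, and chaining the reductions delivers the stated bound.

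The hard part will be the third step: the rigorous passage from the finite-sample experiments to the Gaussian limit together with the transfer of the risk bound, which is where Le Cam's machinery does the real work and where the unbounded loss and the uniformity over the neighborhood demand care. A more elementary route that sidesteps the abstract experiment theory is the van Trees (Bayesian Cram\'er--Rao) inequality of Gill and Levit: placing a smooth prior on a $1/\sqrt{n}$-scale neighborhood of $\theta$ yields the same bound directly, since the prior's own Fisher information enters only at lower order. I would mention this as the practical shortcut, while regarding the experiment-convergence argument as the conceptually clarifying proof.
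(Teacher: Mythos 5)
The paper does not actually prove this statement---it is imported verbatim, with attribution, as Theorem 8.11 of van der Vaart (2000)---so the only meaningful comparison is against that source's proof, which your proposal reconstructs essentially exactly: localization at scale $1/\sqrt{n}$, local asymptotic normality from differentiability in quadratic mean, convergence of the localized experiments to the Gaussian shift limit, transfer of the risk bound via the asymptotic minimax theorem for truncated bowl-shaped losses, and Anderson's lemma (equivalently a least-favorable Gaussian prior with variance tending to infinity) to evaluate the limiting minimax risk as $\psi'(\theta) I_\theta^{-1} \psi'(\theta)^\T$. Your concluding remark that the van Trees inequality (Gill and Levit) gives the same bound for quadratic loss by a more elementary route is also correct, and is a sensible practical shortcut to flag alongside the experiment-convergence argument.
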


\bigskip

Intuitively, Theorem \ref{thm:craominimax} says the (asymptotic, worst-case) mean squared error   cannot be smaller than 
$ \psi'(\theta)^2 / n \var_\theta\{s_\theta(Z)\} $, 
for any estimator $\widehat\psi$ in a smooth parametric model.  \\

Thus optimality in the above local asymptotic minimax sense is somewhat settled for \emph{smooth parametric models}. However, what if anything does this say about larger semi- or nonparametric models? Can the above Cramer-Rao bounds be exploited to construct lower bound benchmarks there as well? These questions will be answered in the following subsection. \\

\subsection{Parametric Submodels}

The standard way to connect classic Cramer-Rao bounds for parametric models to larger more complicated nonparametric models  is through a technical device called the  \emph{parametric submodel} \citep{stein1956efficient}. We first give a definition, then describe high-level ideas and give some examples. \\

\begin{definition}
A \emph{parametric submodel} is a smooth parametric model $\mathcal{P}_\epsilon = \{ P_\epsilon : \epsilon \in \R\}$ that satisfies
(i) $\mathcal{P}_\epsilon \subseteq \mathcal{P}$, and (ii) $P_{\epsilon=0} = \Pb$. 
\end{definition}

\bigskip

Thus, in words, a parametric submodel is a parametric model that (i) is contained in the larger model $\mathcal{P}$ of interest, and (ii) equals the true distribution at $\epsilon=0$, i.e., contains the truth $\Pb$. It is important to recognize that a parametric submodel is a technical device used to extend theory from parametric to nonparametric nodels, and not a tool for data analysis \citep{tsiatis2006semiparametric}; in particular, to ensure that property (ii) $P_{\epsilon=0}=\Pb$ holds, parametric submodels must depend on the true distribution $\Pb$, which is of course unknown. \\

The high-level idea behind using submodels is that it is never harder to estimate a parameter over a \emph{smaller} model, relative to a larger one in which the smaller model is contained. So any lower bound for a submodel will also be a valid lower bound for the larger model $\mathcal{P}$. Of course, valid but vacuous lower bounds are easy to construct (e.g., the mean squared error can be no less than zero), so in the next section we will also have to show that these bounds are \emph{relevant}, in the sense that they can actually be attained under some plausible conditions. \\

It turns out that, for the purposes of constructing lower bound benchmarks for functional estimation, it often suffices to use one-dimensional parametric submodels.  A common choice of submodel for nonparametric $\mathcal{P}$ is, for some mean-zero function $h:\mathcal{Z} \rightarrow \R$, 
\begin{equation} \label{eq:exsubmodel}
 p_\epsilon(z) = d\Pb(z) \{ 1 + \epsilon h(z) \} 
 \end{equation}
where $\|h \|_\infty \leq M < \infty$ and $\epsilon < 1/M$ so that $p_\epsilon(z) \geq 0$. Note for this submodel the score function is $\frac{\partial}{\partial\epsilon} \log p_\epsilon(z)|_{\epsilon=0} = \frac{\partial}{\partial\epsilon} \log\{1+\epsilon h(z)\}|_{\epsilon=0} = h(z)$. Therefore the Cramer-Rao lower bound for some $P_\epsilon$ in the example one-dimensional submodel $\mathcal{P}_\epsilon$ above is given by
$$ \frac{\psi'(P_\epsilon)^2}{\var_{P_\epsilon}\{s_\epsilon(Z)\}} = \frac{ \left\{ \frac{\partial}{\partial\epsilon} \psi(P_\epsilon) |_{\epsilon=0} \right\}^2 }{\E_{P_\epsilon}\{h(Z)^2\}} . $$
Other examples of submodels can be found in Section 4.2 of \citet{tsiatis2006semiparametric}, for example. \\

Since any lower bound for the submodel $\mathcal{P}_\epsilon$ is also a lower bound for $\mathcal{P}$, the best and most informative is the \emph{greatest} such lower bound. Can we say anything about the best such lower bound for generic functionals and/or submodels? The next two subsections consider this question. \\

\subsection{Pathwise Differentiability}

Recall the Cramer-Rao bound
\begin{equation} \label{eq:subcrao}
\frac{ \left\{ \frac{\partial}{\partial\epsilon} \psi(P_\epsilon) |_{\epsilon=0} \right\}^2 }{\E_{P_\epsilon}\{s_\epsilon(Z)^2\}} 
\end{equation}
for submodel $\mathcal{P}_\epsilon$ described in the previous subsection. To find the best such lower bound, we would like to optimize the above over all $P_\epsilon$ in some submodel. It is not a priori clear how generally this can be accomplished, since different functionals $\psi$ could yield very different numerators. Therefore let us first consider what we can say about the derivative in the numerator of \eqref{eq:subcrao}, for a large class of \emph{pathwise differentiable} functionals. \\

Namely, suppose the functional $\psi: \mathcal{P} \mapsto \R$ is smooth, as a map from distributions to the reals, in the sense that it admits a kind of \emph{distributional Taylor expansion}
\begin{equation} \label{eq:vonmises}
\psi(\overline{P}) - \psi(P) = \int \varphi(z;\overline{P}) \ d(\overline{P}-P)(z) + R_2(\overline{P},P)
\end{equation}
for distributions $\overline{P}$ and $P$, often called a \emph{von Mises expansion}, where $\varphi(z;P)$ is a mean-zero, finite-variance function satisfying $\int \varphi(z;P) \ dP(z) =0$ and $\int \varphi(z;P)^2 \ dP(z) < \infty$, and $R_2(\overline{P},P)$ is a \emph{second-order remainder} term (which means it only depends on \emph{products} or \emph{squares} of differences between $\overline{P}$ and $P$). \\

Intuitively, the von Mises expansion above is just an infinite-dimensional or distributional analog of a Taylor expansion, with $\varphi(z;Q)$ acting as a usual derivative term; it describes how the functional $\psi$ changes locally when the  distribution changes from $P$ to $\overline{P}$. For example, when $Z\in \{1,...,k\}$ is discrete and so $\overline{P}$ and $P$ have $k$ countable components, the von Mises expansion reduces to a standard multivariate Taylor expansion with
$$  R_2(\overline{P},P) = \psi(\overline{p}_1,...,\overline{p}_k) - \psi(p_1,...,p_k) - \sum_j \frac{\partial}{\partial t_j} \psi(t_1,...,t_k) \Bigm|_{t=\overline{p}} (\overline{p}_j-p_j) . $$
We refer to \citet{fisher2021visually} for more intuition and visual illustrations. \\ 

\begin{remark}
The von Mises terminology comes from, e.g., \citet{vonmises1947asymptotic}, and has been used by \citet{fernholz1983mises}, \citet{van2000asymptotic}, \citet{robins2009quadratic},  \citet{kandasamy2015nonparametric}, among others. The function $\varphi(z;P)$ has been referred to as an influence function, pathwise derivative, gradient, and Neyman orthogonal score \citep{pfanzagl1982contributions, bickel1993efficient, newey1994asymptotic, tsiatis2006semiparametric, van2003unified, chernozhukov2018double}. However it can be important to distinguish between the influence function for a \emph{parameter}, as in \eqref{eq:vonmises}, and the influence function for an \emph{estimator}; this point will be discussed in more detail in the next section. To distinguish between the two, we  typically refer to the influence function for a {parameter} as in \eqref{eq:vonmises} as an \emph{influence curve}. Note that for now, the expansion \eqref{eq:vonmises} is only a smoothness property of the functional $\psi: \mathcal{P} \mapsto \R$, and has nothing to do yet with any data or estimation procedure.  \\
\end{remark}

Many important functionals satisfy the expansion \eqref{eq:vonmises}; we detail a few in the following examples. \\

\begin{example}[continues=ex:ate]
The average treatment effect or missing outcome functional 
$$ \psi(P) = \E_P \{ \E_P(Y \mid X, A=1)\} $$
satisfies \eqref{eq:vonmises} with 
$$ \varphi(Z;P) = \frac{\one(A=1)}{P(A=1 \mid X)} \Big\{ Y - \E_P(Y \mid X, A=1) \Big\} + \E_P(Y \mid X,A=1) - \psi(P) $$
and 
$$ R_2(\overline{P},P) = \int \left\{ \frac{1}{\overline\pi(x)} - \frac{1}{\pi(x)} \right\} \Big\{ \mu(x) - \overline\mu(x) \Big\} \pi(x) \ dP(x) $$
where $\pi(x) = P(A=1 \mid X=x)$ and $\overline\pi(x) = \overline{P}(A=1 \mid X=x)$, and similarly for $\mu(x) = \E_P(Y \mid X=x,A=1)$. 
\end{example}

\bigskip

\begin{example}[continues=ex:cov] The expected conditional covariance functional
$$ \psi(P) = \E_P\{ \cov_P(A,Y \mid X) \}$$
satisfies \eqref{eq:vonmises} with
$$ \varphi(Z;P) = \Big\{ A - \E_P(A \mid X) \Big\} \Big\{ Y - \E_P(Y \mid X) \Big\} - \psi(P) $$
and 
$$ R_2(\overline{P},P) = \int \Big\{ \overline\pi(x) - \pi(x) \Big\} \Big\{ \overline\mu(x) - \mu(x) \Big\}   \ dP(x) $$
where $\pi(x) = \E_P(A \mid X=x)$ and  $\mu(x) = \E_P(Y \mid X=x)$, with $\overline\pi$ and $\overline\mu$ corresponding versions under $\overline{P}$.
\end{example}

\bigskip

\begin{example}[continues=ex:expdens]
The expected density functional
$$ \psi(P) = \E_P\{ p(Z) \} = \int p(z)^2 \ dz $$
satisfies \eqref{eq:vonmises} with
$$ \varphi(Z;P) = 2\Big\{ p(Z) - \psi(P) \Big\} $$
and 
$$ R_2(\overline{P},P) = - \int \Big\{ \overline{p}(z) - p(z) \Big\}^2   \ dz . $$
\end{example}

\bigskip

A related notion of smoothness is \emph{pathwise differentiability}, i.e., that
\begin{equation} \label{eq:pathwise}
\frac{\partial}{\partial \epsilon} \psi(P_\epsilon) \Bigm|_{\epsilon=0} = \int \varphi(z;\Pb) s_\epsilon(z) \ d\Pb(z) 
\end{equation}
for every smooth submodel $P_\epsilon$. This is implied by the von Mises expansion \eqref{eq:vonmises}, under regularity conditions, by taking $(P,Q)=(P_\epsilon,P)$, differentiating both sides and noting that $R_2$ being second order means
$$ \frac{\partial}{\partial \epsilon} R_2(P,P_\epsilon) \Bigm|_{\epsilon=0} = 0 . $$
For more details, see for example Lemma 2 of \citet{kennedy202xdensity}; the above condition is essentially equivalent to what \citet{chernozhukov2018double} refer to as Neyman orthogonality. Pathwise differentiability \eqref{eq:pathwise}, and the von Mises expansion \eqref{eq:vonmises} more generally, play key roles in both deriving lower bound benchmarks (via the \emph{efficient influence function}) and constructing estimators that attain the benchmark. We will continue exploring this first role in the following subsection.   \\

\subsection{The Best Lower Bound  \& Efficient Influence Function}

Armed with the smoothness of our functional $\psi$, as characterized by the von  Mises expansion \eqref{eq:vonmises} and related pathwise differentiability \eqref{eq:pathwise}, we now have enough to characterize the greatest lower bound for generic smooth parametric submodels. \\

For simplicity consider the particular submodel in \eqref{eq:exsubmodel}; it turns out this class of submodel is often sufficient to yield relevant (attainable) lower bounds. For this submodel, the score is $s_\epsilon(z) = h(z)$ and  by pathwise differentiability we have
$$  \frac{\partial}{\partial \epsilon} \psi(P_\epsilon) \Bigm|_{\epsilon=0} = \int \varphi(z;\Pb) h(z) \ d\Pb(z) . $$
Therefore over all Cramer-Rao bounds at $\epsilon=0$ we have
\begin{align*}
\sup_{P_\epsilon } \frac{\psi'(P_\epsilon)^2}{\var\{s_\epsilon(Z)\}} &= \sup_{h} \frac{\E\{ \varphi(Z;\Pb) h(Z) \}^2 }{\E\{h(Z)^2\}} \leq \E\{ \varphi(Z;\Pb)^2 \} = \var\{ \varphi(Z)\}
\end{align*}
where the first equality follows by pathwise differentiability and the form of the submodel, and the inequality by Cauchy-Schwarz. The fact that the greatest lower bound is not just bounded above by $\E(\varphi^2) = \var(\varphi)$, but that this upper bound is actually attained follows from the fact that, for one of the submodels we can take $h(z) = \varphi(z;\Pb)$, as long as $\varphi$ is in the tangent space (i.e., closure of submodel score space). We refer to Lemma 25.19 of \citet{van2000asymptotic} for more details and discussion. \\

\begin{remark}
Recall in this review we are mostly focusing on proper nonparametric models, where the tangent space is the whole Hilbert space of mean-zero, finite-variance functions (see Theorem 4.4 of \citet{tsiatis2006semiparametric}); in that case \eqref{eq:vonmises} only holds for at most one influence curve $\varphi$, which must also be a valid score. However, in proper semiparametric models with a restricted tangent space, the expansion \eqref{eq:vonmises} can hold for potentially many influence curves $\varphi$, and then the one that is also a valid score is called the \emph{efficient influence function}. In contrast, in nonparametric models, there is only one influence curve, and that influence curve is also the efficient influence function. \\
\end{remark}

Therefore the variance of the efficient influence function
\begin{equation} \label{eq:varphi}
\var\{ \varphi(Z;\Pb)\}
\end{equation}
acts as a nonparametric analog of the Cramer-Rao bound. This is critically important, as it implies no estimator can have smaller mean squared error than \eqref{eq:varphi}, in a local asymptotic minimax sense. In particular, if we can show for a particular estimator $\widehat\psi$ that 
$$ \sqrt{n}( \widehat\psi - \psi) \indist N\Big(0, \var\{ \varphi(Z)\} \Big) $$
then we can say the estimator attains the nonparametric efficiency bound. This local asymptotic minimaxity can be formalized as in the following result, for example. \\

\begin{theorem}[Corollary 2.6, \citet{van2002semiparametric}] \label{thm:minimax}
Let $\psi: \mathcal{P} \mapsto \R$ be pathwise differentiable with efficient influence function $\varphi$. Assume the model is nonparametric or the tangent space is a convex cone. Then
$$ \inf_{\delta>0} \liminf_{n \rightarrow \infty}  \sup_{\text{TV}(P,Q)< \delta} \ n\  \E_{Q} \Big[ \{\widehat\psi - \psi(Q)\}^2 \Big] \ \geq \ \var\{ \varphi(Z;P)\} $$
for any estimator sequence $\widehat\psi=\widehat\psi_n$.
\end{theorem}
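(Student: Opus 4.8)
The plan is to lift the parametric local asymptotic minimax bound of Theorem~\ref{thm:craominimax} to the nonparametric model via the submodel device developed above. Fix a bounded, mean-zero direction $h$ in the tangent space and consider the one-dimensional path $t \mapsto P_{t/\sqrt{n}}$ obtained from the submodel \eqref{eq:exsubmodel} with $\epsilon = t/\sqrt{n}$, which has score $h$ at $t=0$. The first step is to note that the total variation ball $\{Q : \text{TV}(P,Q) < \delta\}$ contains the local segment $\{P_{t/\sqrt{n}} : |t| \le B\}$ for every fixed $B$ once $n$ is large, since $\text{TV}(P, P_{t/\sqrt{n}}) = O(t/\sqrt{n}) \to 0$. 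Hence the inner supremum over the TV ball dominates the supremum over $t$ along the path, reducing the nonparametric problem to a one-parameter local asymptotic minimax problem in $t$.

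Second, I would establish local asymptotic normality (LAN) along the path. Differentiability in quadratic mean gives the expansion
$$ \sum_{i=1}^n \log \frac{dP_{t/\sqrt{n}}}{dP}(Z_i) \indist t\,G - \tfrac{1}{2}t^2\,\E\{h(Z)^2\}, \qquad G \sim N\big(0,\E\{h(Z)^2\}\big), $$
while pathwise differentiability \eqref{eq:pathwise} supplies the linear drift $\sqrt{n}\{\psi(P_{t/\sqrt{n}}) - \psi(P)\} \to t\,\E\{\varphi(Z)h(Z)\}$. By Le Cam's asymptotic representation theorem, the localized experiments converge to the Gaussian shift experiment in which one observes a single shifted Gaussian and estimates the linear parameter $t\,\E\{\varphi h\}$, and any estimator sequence $\widehat\psi$ inherits a limiting risk no smaller than the minimax risk of this limit experiment.

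Third, I would evaluate the Gaussian-experiment lower bound and optimize over the direction. Since squared-error loss is bowl-shaped (subconvex), Anderson's lemma gives that the local asymptotic minimax risk along direction $h$ is at least $\E\{\varphi h\}^2 / \E\{h^2\}$, exactly the Cramer-Rao ratio \eqref{eq:subcrao} transferred to the limit. Taking the supremum over $h$ in the tangent space and applying Cauchy--Schwarz, as in the derivation of the bound \eqref{eq:varphi}, yields $\var\{\varphi(Z;P)\}$, attained at the least-favorable direction $h = \varphi$. This final choice is admissible precisely under the stated hypothesis: in a nonparametric model, or whenever the tangent space is a convex cone, the efficient influence function $\varphi$ is itself a valid score, so the supremum is achieved rather than merely approached.

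I expect the main obstacle to be covering \emph{every} estimator sequence, not only regular ones. This is what rules out the shorter route through the H\'ajek convolution theorem and forces the full H\'ajek--Le Cam local asymptotic minimax machinery: one must bound the worst case \emph{uniformly} over the shrinking neighborhood (hence the $\inf_{\delta>0} \liminf_n \sup$ ordering in the statement) and pass to the limit experiment through the asymptotic representation theorem, which is the technical heart of the argument. A secondary subtlety is ensuring the TV ball genuinely contains the least-favorable path and that $h = \varphi$ is attainable within the tangent space, which is exactly where the convex-cone or nonparametric assumption does its work.
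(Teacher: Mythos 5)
The paper does not actually prove this theorem---it quotes it (Corollary 2.6 of van der Vaart, 2002) and relies on the Cauchy--Schwarz heuristic in the preceding subsection, which your proposal formalizes. For the \emph{nonparametric} alternative of the hypothesis, your argument is sound and is the standard route: embed one-dimensional paths $(1+\epsilon h)\,d\Pb$ in the TV ball, invoke the parametric local asymptotic minimax bound along each path (Theorem \ref{thm:craominimax}, or re-derive it via LAN and the limit experiment as you do), obtain the per-direction bound $\E\{\varphi(Z)h(Z)\}^2/\E\{h(Z)^2\}$, and take the supremum over $h$; since the nonparametric tangent space is the whole linear space of mean-zero, finite-variance functions, $h=\varphi$ is an admissible direction and the supremum equals $\var\{\varphi(Z)\}$.

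The genuine gap is in the convex-cone alternative. You assert that whenever the tangent set is a convex cone, ``the efficient influence function $\varphi$ is itself a valid score, so the supremum is achieved.'' That is false in general, and the one-dimensional reduction on which your entire proof rests provably cannot deliver the stated bound in that case. The efficient influence function lies in the \emph{closed linear span} of the tangent set, not necessarily in the tangent set itself. Concretely, take the tangent cone $C=\{a e_1 + b e_2 : a,b \geq 0\}$ for orthonormal scores $e_1,e_2$, and a functional whose efficient influence function is $\varphi = e_1 - e_2 \in \mathrm{lin}(C)\setminus C$. Then
$$ \sup_{h \in C} \frac{\langle \varphi, h\rangle^2}{\|h\|^2} \ = \ \sup_{a,b\geq 0} \frac{(a-b)^2}{a^2+b^2} \ = \ 1 \ < \ 2 \ = \ \|\varphi\|^2 , $$
so the supremum of one-dimensional Cram\'{e}r--Rao bounds over directions in the cone falls strictly short of $\var\{\varphi(Z)\}$, even though the theorem's conclusion still holds. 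The actual proof for cones (van der Vaart's Theorem 25.21) must work with \emph{finite-dimensional} subsets of the tangent cone: in the Gaussian limit experiment indexed by the cone, the minimax risk for the induced linear functional equals $\|\varphi\|^2$ because a convex cone contains balls of arbitrarily large radius in its linear span (by scaling and nonempty relative interior), so the restricted minimax risk approaches the unrestricted one. This multi-dimensional minimax step---not membership of $\varphi$ in the tangent set---is where the convex-cone hypothesis does its work. In short: your proof is correct under the nonparametric hypothesis, and matches the paper's heuristic there, but as written it does not prove the theorem under the second hypothesis.
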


\bigskip


\subsection{Deriving Influence Functions}

In the previous section we showed the crucial importance of the von Mises expansion \eqref{eq:vonmises}, whose derivative term (i.e., influence curve) is the efficient influence function in nonparametric models, and thus the key component of local minimax lower bounds for functional estimation. Further, the efficient influence function not only yields efficiency bounds, but also indicates how to construct efficient estimators and sheds light on the conditions required for such estimators to be efficient (these latter points will be discussed shortly, in the next section). In the last subsection we gave some examples of functionals for which the expansion holds, with particular influence curves and remainder terms, but it may not be clear how to derive these expansions from scratch. This is the main topic of this subsection.  \\

There are several ways to derive influence curves. The most general approach is to explicitly compute the pathwise derivative $\psi'(P_\epsilon)$ for appropriate submodels, set this equal to the right-hand-side of the pathwise differentiability equation \eqref{eq:pathwise}, and solve for the influence curve $\varphi$. In the following example we show how this works in two simple cases, for regression functions with discrete $X$ and the expected density. \\

\begin{example}[continues=ex:reg]
Consider the regression function $\E(Y \mid X=x)$. Here we will show that 
$$ \varphi(z;P)  = \frac{\one(X=x)}{P(X=x)} \Big\{ Y - \E_P(Y \mid X=x) \Big\} $$
is the efficient influence function when $X$ is discrete, by showing that \eqref{eq:pathwise} holds for this choice of $\varphi$.
Let $ s_\epsilon(z) =   \frac{\partial}{\partial\epsilon} \log dP_\epsilon(z)  |_{\epsilon=0} $ denote the submodel score, and note
\begin{align*}
\E\{s(Z) \mid X=x\} &= \int  \frac{\partial}{\partial\epsilon} \log dP_\epsilon(z) \Bigm|_{\epsilon=0} d\Pb(y \mid x) \\
&= \int \frac{\partial}{\partial\epsilon} \log \Big\{ P_\epsilon(X=x) dP_\epsilon(y \mid x) \Big\} \Bigm|_{\epsilon=0}  d\Pb(y \mid x) \\
&= \int \left\{ \frac{\partial}{\partial\epsilon} \log P_\epsilon(X=x) \Bigm|_{\epsilon=0} + \frac{\partial}{\partial\epsilon} \log  dP_\epsilon(y \mid x)  \Bigm|_{\epsilon=0} \right\}  d\Pb(y \mid x) \\
&= \frac{\partial}{\partial\epsilon} \log P_\epsilon(X=x) \Bigm|_{\epsilon=0} 
\end{align*}
where the last equality uses the facts that $\int d\Pb(y \mid x)=1$ and that scores have mean zero, i.e., 
\begin{align*}
\int \frac{\partial}{\partial\epsilon} \log  dP_\epsilon(y \mid x)  \Bigm|_{\epsilon=0} d\Pb(y \mid x) &= \int \frac{\frac{\partial}{\partial\epsilon} dP_\epsilon(y \mid x) \mid_{\epsilon=0} }{d\Pb(y \mid x) } \ d\Pb(y \mid x) \\
&=  \int \frac{\partial}{\partial\epsilon} dP_\epsilon(y \mid x) \Bigm|_{\epsilon=0} =  \frac{\partial}{\partial\epsilon}  \int dP_\epsilon(y \mid x) \Bigm|_{\epsilon=0} =  0 
\end{align*} 
where the first equality used the fact that $\frac{\partial}{\partial \epsilon} \log dP_\epsilon(y \mid x) = \frac{\partial}{\partial \epsilon} dP_\epsilon(y \mid x) / dP_\epsilon(y \mid x)$. Thus in this case the pathwise derivative on the left-hand side of \eqref{eq:pathwise} is
\begin{align*}
\frac{\partial}{\partial\epsilon} \int y \ dP_\epsilon(y \mid x) \Bigm|_{\epsilon=0} &=  \int y \left\{ \frac{\partial}{\partial\epsilon} \log dP_\epsilon(y \mid x) \right\} \Bigm|_{\epsilon=0} d\Pb(y \mid x)  \\
&=  \int y \left\{ \frac{\partial}{\partial\epsilon} \log \frac{dP_\epsilon(z)}{P_\epsilon(X=x)} \right\} \Bigm|_{\epsilon=0} d\Pb(y \mid x)  \\
&=  \int y \left\{ \frac{\partial}{\partial\epsilon} \log dP_\epsilon(z) - \frac{\partial}{\partial\epsilon} \log P_\epsilon(X=x) \right\} \Bigm|_{\epsilon=0}  dP_\epsilon(y \mid x) \\
&= \E\{ Y s_\epsilon(Z) \mid X=x \} - \E\{s_\epsilon(Z) \mid X=x\} \E(Y \mid X=x) .
\end{align*} 
where the first equality holds as long as we can exchange integrals and derivatives and since $P_{\epsilon=0}=\Pb$. 
Now for the right-hand side of \eqref{eq:pathwise} we have
\begin{align*}
\int \varphi(z;\Pb) s_\epsilon(z) \ d\Pb(z) &= \E\left[ \frac{\one(X=x)}{ \Pb(X=x)} \Big\{ Y - \E(Y \mid X=x) \Big\} s_\epsilon(Z)  \right] \\
&=  \E\{Y s_\epsilon(Z) \mid X=x\} - \E\{ s_\epsilon(Z) \mid X=x\} \E(Y \mid X=x) 
\end{align*}
by iterated expectation. This yields the result. 
\end{example}

\bigskip

\begin{example}[continues=ex:expdens]
Let $p_0$ denote the density of $\Pb$. Under regularity conditions, the pathwise derivative for the expected density functional is given by
\begin{align*}
\frac{\partial}{\partial \epsilon} \int p_\epsilon(z)^2 \ dz \Bigm|_{\epsilon=0} &= \int \frac{\partial}{\partial \epsilon} p_\epsilon(z)^2 \ dz \Bigm|_{\epsilon=0} \\
&= \int 2p_\epsilon(z) \frac{\partial}{\partial \epsilon} p_\epsilon(z) \ dz \Bigm|_{\epsilon=0} \\
&= \int 2p_\epsilon(z) \left\{ \frac{\partial}{\partial \epsilon} \log p_\epsilon(z) \right\} p_\epsilon(z) \ dz \Bigm|_{\epsilon=0} \\
&= \int2 \Big\{ p_0(z) - \psi(\Pb) \Big\} \left\{ \frac{\partial}{\partial \epsilon} \log p_\epsilon(z) \right\} \Bigm|_{\epsilon=0}  p_0(z) \ dz 
\end{align*}
where the first equality holds as long as we can exchange integrals and derivatives, the second by the chain rule, the third since $\frac{\partial}{\partial \epsilon} \log p_\epsilon(z) = \frac{\partial}{\partial \epsilon} p_\epsilon(z) / p_\epsilon(z)$, and the last since the score function $s_\epsilon(z) = \frac{\partial}{\partial \epsilon} \log p_\epsilon(z) |_{\epsilon=0}$ has mean zero so that subtracting $\psi$ times the mean does not change the expression.  Now equating the above with the right-hand side of \eqref{eq:pathwise} shows that $2\{p(z) - \psi\}$ is the efficient influence function for the expected density. 
\end{example}

\bigskip

\subsubsection{Two Simple Strategies} 

As was seen above, even for two very simple functionals, the previously described general approach is somewhat indirect and non-constructive. For the mean we  had a putative efficient influence function at our disposal, which may not always be the case, and for the expected density we had to solve an integral equation (which was straightforward in that case, but can be complicated in general). Luckily there are some tricks for making influence function derivations  easier and less time-consuming. We will give two strategies that build off of each other and can both be useful:
\begin{enumerate}
\item Compute Gateaux derivatives assuming data are discrete.
\item Use derivative rules with simple influence functions as building blocks.
\end{enumerate}
The first strategy is somewhat commonplace and has been used and detailed for example in \citet{kandasamy2015nonparametric} and \citet{hines2022demystifying}, for example. We have not seen the second strategy described in the literature. \\

The first step in both strategies is to initially pretend that the data are discrete. This eases calculations and allows for direct computation rather than solving integral equations, while typically still leading to influence functions that are valid in the general continuous or mixed case. The latter can always be verified by checking the general integral version of the pathwise differentiability condition \eqref{eq:pathwise}, for a putative influence function computed by potentially ad hoc means. \citet{ichimura2022influence} show that similar calculations can be used in the  general case by replacing indicators with kernels indexed by a bandwidth converging to zero; however, dealing with indicators eases notation so we use that approach here.   \\

\subsubsection{Strategy 1} 

After reducing to discrete data, the first strategy is to compute the Gateax derivative of the parameter in the direction of a point mass contamination. Specifically, letting $\delta_z=\one(Z=z)$ denote the Dirac measure at $Z=z$, one computes the Gateaux derivative
$$ \frac{\partial}{\partial \epsilon} \psi\{ (1-\epsilon) d\Pb(z) + \epsilon \delta_{z'} \} \Bigm|_{\epsilon=0} $$
which equals the influence function $\varphi(z';P)$. This approach is based on computing the pathwise derivative described in the previous section, but at a special submodel of the form $(1-\epsilon) d\Pb(z) + \epsilon \delta_{z'}$, for which the right-hand side of \eqref{eq:pathwise} happens to equal the influence function itself, rather than its covariance with the score. The reason the latter is true is because the score for this submodel is 
\begin{align*}
\frac{\partial}{\partial \epsilon} \log \Big\{ (1-\epsilon)  d\Pb(z) + \epsilon \delta_{z'} \Big\} \Bigm|_{\epsilon=0} = \frac{\delta_{z'} - d\Pb(z)}{ (1-\epsilon) d\Pb(z) + \epsilon \delta_{z'}}  \Bigm|_{\epsilon=0} = \frac{\delta_{z'}}{d\Pb(z)} - 1
\end{align*}
which implies
\begin{align*}
\int \varphi(z;\Pb) s_\epsilon(z) \ d\Pb(z) &=   \varphi(z';\Pb) . 
\end{align*}
Therefore this strategy can be described as choosing a clever submodel, so that the pathwise derivative immediately returns the influence function itself, at least in discrete models. (Though again, one can either use the approach of \citet{ichimura2022influence} to generalize, or else check that the pathwise differentiability condition \eqref{eq:pathwise} holds in the general case for the putative influence function that is derived.) \\

In what follows we show an example of using this strategy for the regression function parameter in the discrete case. We refer to \citet{hines2022demystifying} for more examples, including the expected density and average treatment effect. \\

\begin{example}[continues=ex:reg]
Now we compute the influence function for $\E(Y \mid X=x)$ in the discrete case, using  Strategy 1 via the Gateaux derivative. Let $\delta_z$ be the Dirac measure at $Z=z$, and note that for the submodel $\Pb_\epsilon(Z=z)=(1-\epsilon) \Pb(Z=z) + \epsilon \delta_{z'}$ we have
$$ \Pb_\epsilon(Y=y \mid X=x) = \frac{\Pb_\epsilon(Z=z) }{\Pb_\epsilon(X=x)} = \frac{ (1-\epsilon) \Pb(Z=z) + \epsilon \one(z=z') }{ (1-\epsilon)  \Pb(X=x) + \epsilon \one(x=x') } . $$ 
Therefore the Gateaux derivative is
\begin{align*}
\frac{d}{d\epsilon} \psi\{ (1-\epsilon) \Pb(z) &+ \epsilon \delta_{z'} \} \Bigm|_{\epsilon=0} = \frac{d}{d\epsilon} \sum_y y \frac{ (1-\epsilon) \Pb(Z=z) + \epsilon \one(z=z') }{ (1-\epsilon)  \Pb(X=x) + \epsilon \one(x=x') } \Bigm|_{\epsilon=0} \\
&= \sum_y y \frac{ \{ \one(z=z') - \Pb(Z=z) \} \Pb(X=x) - \{ \one(x=x') - \Pb(X=x) \} \Pb(Z=z) }{ \Pb(X=x)^2 } \\
&= \sum_y y \left\{ \frac{ \one(z=z') - \Pb(Z=z) }{ \Pb(X=x) } - \frac{ \one(x=x') - \Pb(X=x) }{ \Pb(X=x) } \Pb(Y=y \mid X=x)  \right\} \\
&= \sum_y y \left\{ \frac{ \one(z=z') - \one(x=x') \Pb(Y=y \mid X=x) }{ \Pb(X=x) } \right\} \\
&= \frac{y' \one(x=x')}{\Pb(X=x)} - \frac{\one(x=x') \E(Y \mid X=x)}{\Pb(X=x)} = \varphi(z';\Pb)
\end{align*}
where the second equality follows from the quotient rule, and the rest by rearranging. This gives the result.
\end{example}

\bigskip

As illustrated above, the Gateaux derivative strategy is more constructive and direct, and only requires simple derivative calculations. \\

\subsubsection{Strategy 2}

Strategy 2 is similar in spirit to Strategy 1, but allows for extra shortcuts and can bypass unnecessary derivative calculations required in the standard Gateaux derivative approach. The main idea consists of the following tricks:
\begin{enumerate}
\item[\textsc{Trick} 1.] Pretend the data are discrete.
\item[\textsc{Trick} 2.]  Treat influence functions as derivatives, allowing use of differentiation rules.
\item[\textsc{Trick} 3.] Use influence function building blocks, e.g., that the influence function of  $\E(Y \mid X=x)$ is $\frac{\one(X=x)}{\Pb(X=x)} \{Y - \E(Y \mid X=x) \}$. 
\end{enumerate}

To help make ideas concrete, we  introduce an operator $\IF: \Psi \rightarrow L_2(\Pb)$ that maps functionals $\psi: \mathcal{P} \rightarrow \R$ to their influence functions $\varphi(z) \in L_2(\Pb)$ in a nonparametric model. Then Trick 2 can for example include 
\begin{enumerate}
\item[\textsc{Trick} 2a.] \emph{(product rule)} $\IF(\psi_1 \psi_2) = \IF(\psi_1) \psi_2 + \psi_1 \IF(\psi_2)$
\item[\textsc{Trick} 2b.] \emph{(chain rule)} $\IF(f(\psi)) = f'(\psi) \IF(\psi)$
\end{enumerate}
and Trick 3 can be written as $\IF(\E(Y \mid X=x)) = \frac{\one(X=x)}{\Pb(X=x)} \{Y - \E(Y \mid X=x) \}$. \\

For comparison, we include derivations of the influence function for the average treatment effect using the  general indirect approach and the Gateaux approach of Strategy 1 in the Appendix. The former requires about two pages of calculations, and the latter about one page. Contrast this with the following calculations, which only comprise \emph{four lines} (in addition to not requiring explicit submodels or complicated derivative calculations). \\

\begin{example}[continues=ex:ate]
Let $\mu(x) = \E(Y \mid X=x,A=1)$, $\pi(x) = \Pb(A=1 \mid X=x)$, and $p(x) = \Pb(X=x)$, and let $\psi=\E\{\E(Y \mid X,A=1)\}$ denote the average treatment effect functional. Then the influence function is given by
\begin{align*}
\IF(\psi) &= \IF\left\{ \sum_{x} \mu(x) p(x) \right\} = \sum_{x} \Big[ \IF\{  \mu(x) \} p(x) +  \mu(x) \IF \{ p(x) \} \Big] \\
&=  \sum_{x} \frac{\one(X=x,A=1)}{p(1,x)} \Big\{ Y - \mu(x) \Big\}  p(x) +  \mu(x) \{ \one(x=X) - p(x) \} \Big] \\
&= \frac{A}{\pi(X)} \Big\{ Y - \mu(X) \Big\} + \mu(X) - \psi 
\end{align*}
where the first equality follows by Trick 1, the second by Trick 2a, the third by Trick 3, and the fourth by rearranging. 
\end{example}

\bigskip

\begin{example}[continues=ex:stochastic]
Let $\psi = \int \int \mu(x,a) dG(a \mid x)  \ d\Pb(x)$ denote the stochastic intervention effect, where $\mu(x,a)=\E(Y \mid X=x,A=a)$ and $\pi(a \mid x) = \Pb(A=a \mid X=x)$ as usual. 
Then 
\begin{align*}
\IF(  \psi ) &= \IF\left\{ \sum_{x,a} \mu(x,a) g(a \mid x) p(x) \right\} \\
&=\sum_{x,a}  \Big[ \IF\{ \mu(x,a) \} g(a \mid x) p(x) + \mu(x,a)  g(a \mid x) \IF\{ p(x)\} \Big] \\
&=\sum_{x,a}  \left[ \frac{\one(A=a,X=x)}{\pi(a \mid x)p(x) } \Big\{ Y - \mu(x,a) \Big\} g(a \mid x) p(x) + \mu(x,a)  g(a \mid x) \Big\{ \one(X=x) - p(x) \Big\}  \right] \\
&= \frac{g(A \mid X)}{\pi(A \mid X)} \Big\{ Y - \mu(X,A) \Big\} + \sum_a \mu(X,a) g( a \mid X) - \psi 
\end{align*}
where the first equality follows by Trick 1, the second  by Trick 2a, the third by Trick 3, and the fourth rearranging.
In general when $A^* \sim dG(a \mid x)$ this influence function would be
$$ \frac{g(A \mid X)}{\pi(A \mid X)} \Big\{ Y - \mu(X,A) \Big\} + \int \mu(X,a) \ dG( a \mid X)  - \psi . $$ 
\end{example}

\bigskip

\begin{example}[continues=ex:iv]
Let 
$ \psi = \frac{\E\{\E(Y \mid X, R=1) - \E(Y \mid X,R=0)\}}{\E\{\E(A \mid X, R=1) - \E(A \mid X,R=0)\}}  \equiv \frac{\E\{\mu(X,1) - \mu(X,0)\}}{\E\{\eta(X,1)-\eta(X,0)\}}$ denote the local average treatment effect with instrument $R$. 
First note that $\psi = \psi_{iv,num}/\psi_{iv,den}$ where $\psi_{iv,num} = \E(Y^{R=1}-Y^{R=0})$ and $\psi_{iv,den}=\E(A^{R=1}-A^{R=0})$, so that 
\begin{align*}
\varphi_{iv,num} \equiv \IF(\psi_{iv,num}) & =  \frac{2R-1}{\varpi(R \mid X)} \Big\{ Y - \mu(X,R) \Big\} + \mu(X,1) - \mu(X,0) - \psi_{iv,num} \\
\varphi_{iv,den} \equiv \IF(\psi_{iv,den}) &=  \frac{2R-1}{\varpi(R \mid X)} \Big\{ A - \eta(X,R) \Big\} + \eta(X,1) - \eta(X,0) - \psi _{iv,den}
\end{align*}
for $\varpi(r \mid x) = \Pb(R=r \mid X=x)$. Therefore
\begin{align*}
\IF(\psi) &= \IF\left( \frac{\psi_{iv,num}}{\psi_{iv,den}} \right) = \frac{\IF(\psi_{iv,num})}{\psi_{iv,den}} - \left( \frac{\psi_{iv,num}}{\psi_{iv,den}} \right)\frac{\IF(\psi_{iv,den})}{\psi_{iv,den}} \\
&= \frac{1}{\psi_{iv,den}} \bigg( \frac{2Z-1}{\varpi(Z \mid X)} \Big\{ Y - \mu(X,Z) \Big\} + \mu(X,1) - \mu(X,0) \\
& \hspace{.5in} -  \psi \left[ \frac{2Z-1}{\varpi(Z \mid X)} \Big\{ A - \eta(X,Z) \Big\} + \eta(X,1) - \eta(X,0) \right] \bigg)
\end{align*}
where the second equality follows by Trick 2a, and the third by Trick 3. 
\end{example} 

\bigskip

\begin{example}[continues=ex:gformula]
Let $\mu_{11}(x_2,x_1) = \E(Y \mid A_2=1,X_2=x_2, A_1=1,X_1=x_1)$, $\pi_t(h_t) = \Pb(A_t=1 \mid H_t=h_t)$ for $H_t = (\overline{X}_t,\overline{A}_{t-1})$, and let
$$ \psi \equiv \E(Y^{11}) = \int \int \E(Y \mid A_2=1,X_2, A_1=1,X_1) \ d\Pb(X_2 \mid A_1=1,X_1) \ d\Pb(X_1)$$
denote the g-formula functional. 
Then using the same logic as in previous examples, we have
\begin{align*}
\IF(\psi) &= \IF\left\{ \sum_{x_1,x_2} \E(Y \mid A_2=1,X_2=x_2, A_1=1,X_1=x_1) p(x_2 \mid a_1=1, x_1) p(x_1) \right\} \\
&= \sum_{x_1,x_2} \bigg[ \IF\Big\{ \E(Y \mid A_2=1,X_2=x_2, A_1=1,X_1=x_1) \Big\} p(x_2 \mid a_1=1, x_1) p(x_1) \\
& \hspace{.5in}  +  \E(Y \mid A_2=1,X_2=x_2, A_1=1,X_1=x_1) \IF\Big\{ p(x_2 \mid a_1=1, x_1) \Big\}  p(x_1) \\
& \hspace{.5in} +   \E(Y \mid A_2=1,X_2=x_2, A_1=1,X_1=x_1) p(x_2 \mid a_1=1, x_1) \IF\Big\{ p(x_1)\Big\} \bigg] \\
&= \sum_{x_1,x_2} \bigg[ \frac{A_2 A_1 \one(X_2=x_2,X_1=x_1)}{\pi_2(h_2) \pi_1(h_1)} \Big\{ Y - \mu_{11}(x_2,x_1) \Big\} \\
&\hspace{.5in} + \mu_{11}(x_2,x_1) \frac{A_1 \one(X_1=x_1)}{\pi_1(h_1)} \Big\{ \one(X_2=x_2) - p(x_2 \mid a_1=1,x_1) \Big\} \\
& \hspace{.5in} + \mu_{11}(x_2,x_1) p(x_2 \mid a_1 = 1, x_1) \Big\{ \one(X_1=x_1) - p(x_1) \Big\} \bigg] \\
&= \frac{A_2 A_1}{\pi_2(H_2) \pi_1(H_1)} \Big\{ Y - \mu_{11}(X_2,X_1) \Big\} +  \frac{A_1 }{\pi_1(H_1)} \Big[  \mu_{11}(X_2,X_1)  - \E\{ \mu(X_2,X_1) \mid A_1=1,X_1\} \Big] \\
& \hspace{.5in} + \E\{ \mu(X_2,X_1) \mid A_1=1,X_1\} - \psi .
\end{align*}
\end{example}

\bigskip

\begin{example}[continues=ex:expdens]
For  $\psi=\E\{p(Z)\}$ the expected density functional we have
\begin{align*}
\IF(\psi) &= \IF\left\{ \sum_x p(x)^2 \right\} = \sum_x 2 p(x) \IF\{ p(x) \} \\
&= \sum_x 2 p(x) \Big\{ \one(X=x) - p(x) \Big\} = 2 \Big\{ p(X) - \psi \Big\}
\end{align*}
where the first equality follows by Trick 1, the second by Trick 2b, and the third by Trick 3.
\end{example}

\bigskip

\section{Methods: Influence Function-Based Estimators}
\label{sec:methods}

We now have a generic minimax lower bound, i.e., benchmark for efficient estimation, in nonparametric models (e.g., Theorem \ref{thm:minimax}). Further we have some simple practical tools for deriving efficient influence functions, which are the key components in these minimax lower bounds.  However at this point  nothing has been said about whether these bounds are actually attainable in any generality with real estimators. This is the goal of the present section. \\

\subsection{Using IFs to Correct Plug-In Estimators}

Recall the von Mises (i.e., distributional Taylor) expansion \eqref{eq:vonmises}, in which the functional $\psi: \mathcal{P} \mapsto \R$ satisfies
\begin{equation}
\psi(\overline{P}) - \psi(P) = \int \varphi(z;\overline{P}) \ d(\overline{P}-P)(z) + R_2(\overline{P},P)
\end{equation}
for distributions $\overline{P}$ and $P$, where $\varphi(z;P)$ is a mean-zero, finite-variance function satisfying $\int \varphi(z;P) \ dP(z) =0$ and $\int \varphi(z;P)^2 \ dP(z) < \infty$, and $R_2(\overline{P},P)$ is a \emph{second-order remainder} term (which means it only depends on \emph{products} or \emph{squares} of differences between $\overline{P}$ and $P$). This expansion suggests that generic \emph{plug-in estimators} of the form $\widehat\psi_{pi} = \psi(\widehat\Pb)$ have a first-order bias, since evaluating the expansion at $(\widehat\Pb,\Pb)$ gives
$$ \psi(\widehat\Pb) - \psi(\Pb) = - \int \varphi(z;\widehat\Pb) \ d\Pb(z) + R_2(\widehat\Pb,\Pb) $$
after noting that $\int \varphi(z;\widehat\Pb) \ d\widehat\Pb(z) = 0$ since the influence curve $\varphi$ has mean zero. The following example illustrates with the average treatment effect parameter. \\

\begin{example}[continues=ex:ate]
A plug-in estimator for the average treatment effect functional $\psi = \E\{ \E(Y \mid X,A=1)\}$ is given by $\widehat\psi_{pi} = \Pn\{ \widehat\mu(X) \}$, for $\widehat\mu(x)$ an estimator of $\mu(x) = \E(Y \mid X=x, A=1)$. Suppose for simplicity that $\widehat\mu$ is estimated on a separate sample independent of the sample on which $\Pn$ operates. Then the bias of this plug-in estimator is given by
$$ \E( \widehat\psi_{pi} - \psi) =  \int \E\{ \widehat\mu(x) - \mu(x) \} \ d\Pb(x) , $$
which is just the integrated bias of the regression estimator $\widehat\mu$ itself. For generic estimators $\widehat\mu$ this integrated bias would be expected to be of the same order as the say pointwise bias itself, and so in large nonparametric models with standard tuning (e.g., via cross-validation) would be larger than $1/\sqrt{n}$. Intuitively, this plug-in estimator (if used without special tuning) essentially makes the problem of parameter estimation as hard as regression estimation, whereas the results of the previous section suggest the former should be easier (e.g., in terms of smaller mean squared errors, of the order $1/\sqrt{n}$, being achievable). 
\end{example}

\bigskip

Crucially, the expansion \eqref{eq:vonmises} also suggests how to correct or de-bias generic plug-in estimators, namely by estimating the bias term $-\int \varphi(z;\widehat\Pb) \ d\Pb(z)$ and subtracting it off. Since this expression is just a mean, a natural estimator is given by the corresponding sample average $-\Pn\{ \varphi(Z;\widehat\Pb)\}$, leading to the bias-corrected estimator
\begin{equation} \label{eq:onestep}
\widehat\psi = \psi(\widehat\Pb) + \Pn\{ \varphi(Z;\widehat\Pb) \}  .
\end{equation}
This estimator is also often called a \emph{one-step estimator}, and can be viewed as a generalization of Newton methods for mimicking maximum likelihood estimators in parametric models \citep{pfanzagl1982contributions, bickel1993efficient}.   \\

\begin{example}[continues=ex:ate]
The one-step estimator for the average treatment effect functional $\psi = \E\{\E(Y \mid X,A=1)\}$ is
$$ \widehat\psi = \Pn \left[\widehat\mu(X) + \frac{A\{Y - \widehat\mu(X)\}}{\widehat\pi(X)} \right] $$
where $\widehat\mu$ and $\widehat\pi$ are estimators of $\mu(x)=\E(Y \mid X=x,A=1)$ and $\pi(x) = \Pb(A=1 \mid X=x)$. 
\end{example}

\bigskip

\begin{example}[continues=ex:stochastic]
Let $\psi = \int \int \mu(x,a) dG(a \mid x)  \ d\Pb(x)$ denote the stochastic intervention effect, where $\mu(x,a)=\E(Y \mid X=x,A=a)$ and $\pi(a \mid x) = \Pb(A=a \mid X=x)$. 
The one-step estimator is given by
$$ \widehat\psi = \Pn \left[ \sum_a \widehat\mu(X,a) g( a \mid X)  +  \frac{g(A \mid X)}{\widehat\pi(A \mid X)} \Big\{ Y - \widehat\mu(X,A) \Big\} \right] . $$
\end{example}

\bigskip

\begin{example}[continues=ex:iv]
Let 
$ \psi = \frac{\E\{\E(Y \mid X, R=1) - \E(Y \mid X,R=0)\}}{\E\{\E(A \mid X, R=1) - \E(A \mid X,R=0)\}}  \equiv \frac{\E\{\mu(X,1) - \mu(X,0)\}}{\E\{\eta(X,1)-\eta(X,0)\}}$ denote the local average treatment effect with instrument $R$. 
The one-step estimator is given by
\begin{align*}
\widehat\psi &= \frac{\Pn\{\widehat\mu(X,1) - \widehat\mu(X,0)\}}{\Pn\{\widehat\eta(X,1)-\widehat\eta(X,0)\}} \left[ 1 - \frac{\Pn\{ \varphi_{den}(Z;\widehat\Pb)  \} }{\Pn\{\widehat\eta(X,1)-\widehat\eta(X,0)\}}  \right]+ \frac{\Pn\{ \varphi_{num}(Z;\widehat\Pb) \} }{\Pn\{\widehat\eta(X,1)-\widehat\eta(X,0)\}} 
\end{align*}
where
\begin{align*}
\varphi_{num}(z;\Pb) &= \frac{2Z-1}{\varpi(Z \mid X)} \Big\{ Y - \mu(X,Z) \Big\} + \mu(X,1) - \mu(X,0) \\
\varphi_{den}(z;\Pb) &=\frac{2Z-1}{\varpi(Z \mid X)} \Big\{ A - \eta(X,Z) \Big\} + \eta(X,1) - \eta(X,0) .
\end{align*}
Alternatively one could use the one-step estimators for the numerator and denominator separately, yielding $\widehat\psi = \Pn\{ \varphi_{num}(Z;\widehat\Pb) \}  / \Pn\{ \varphi_{den}(Z;\widehat\Pb) \} $.
\end{example} 

\bigskip

\begin{example}[continues=ex:gformula]
Let $\mu_{11}(x_2,x_1) = \E(Y \mid A_2=1,X_2=x_2, A_1=1,X_1=x_1)$, $\pi_t(h_t) = \Pb(A_t=1 \mid H_t=h_t)$ for $H_t = (\overline{X}_t,\overline{A}_{t-1})$, and let
$$ \psi \equiv \E(Y^{11}) = \int \int \E(Y \mid A_2=1,X_2, A_1=1,X_1) \ d\Pb(X_2 \mid A_1=1,X_1) \ d\Pb(X_1)$$
denote the g-formula functional. The one-step estimator is given by
\begin{align*}
\widehat\psi &= \Pn \bigg( \frac{A_2 A_1}{\widehat\pi_2(H_2) \widehat\pi_1(H_1)} \Big\{ Y - \widehat\mu_{11}(X_2,X_1) \Big\} +  \frac{A_1 }{\widehat\pi_1(H_1)} \Big[  \widehat\mu_{11}(X_2,X_1)  - \widehat\E\{ \widehat\mu(X_2,X_1) \mid A_1=1,X_1\} \Big] \\
& \hspace{.8in} + \widehat\E\{ \widehat\mu(X_2,X_1) \mid A_1=1,X_1\}  \bigg) .
\end{align*}
\end{example}

\bigskip

\begin{example}[continues=ex:expdens]
The one-step estimator for the expected density functional $\psi = \E\{p(Z)\}$ is
$$ \widehat\psi =  2 \Pn\{ \widehat{p}(Z) \}  - \int \widehat{p}(z)^2 \ dz . $$
\end{example}

\bigskip

The one-step estimator \eqref{eq:onestep} can be analyzed in some generality, as it is a simple average of an estimated function. By definition we have the important decomposition
\begin{align}
\widehat\psi - \psi &=  \psi(\widehat\Pb) + \Pn\{ \varphi(Z;\widehat\Pb) \} - \psi(\Pb) \nonumber \\
&=  (\Pn-\Pb) \{ \varphi(Z;\widehat\Pb) \} + R_2(\widehat\Pb,\Pb) \nonumber \\
&=  (\Pn-\Pb) \{  \varphi(Z;\Pb)\} + (\Pn-\Pb) \{ \varphi(Z;\widehat\Pb) - \varphi(Z;\Pb)\} + R_2(\widehat\Pb,\Pb) \nonumber \\
&\equiv S^* + T_1 + T_2 \label{eq:decomp}
\end{align}
where the first line follows by definition of the one-step estimator $\widehat\psi$, the second by the expansion \eqref{eq:vonmises}, and the third after adding and subtracting $(\Pn-\Pb) \{  \varphi(Z;\Pb)\} $. The first term 
$$ S^* = (\Pn-\Pb) \{  \varphi(Z;\Pb)\}  $$
is a simple sample average of a fixed function, and so by the central limit theorem, for example, it behaves as a normally distributed random variable with variance $\var(\varphi)/n$, up to error $o_\Pb(1/\sqrt{n})$. The second term 
$$ T_1 = (\Pn-\Pb) \{ \varphi(Z;\widehat\Pb) - \varphi(Z;\Pb)\} $$ 
is often called an empirical process term, and is typically of smallest order since it is a sample average of a term with shrinking variance (as long as $\varphi(z;\widehat\Pb)$ converges to $\varphi(z;\Pb)$ in a sense to be made formal shortly). The third term
$$ T_2 = R_2(\widehat\Pb,\Pb) = \psi(\widehat\Pb) - \psi(\Pb) + \int \varphi(z;\widehat\Pb) \ d\Pb(z) $$
is the really crucial one. For non-bias-corrected plug-in estimators, this term will generally dominate, but for one-step estimators it will generally involve second-order products of errors, which can be negligible under nonparametric conditions (such as sparsity or smoothness). \\

\begin{remark}[Alternatives to One-Step Correction]
Although the above one-step estimator is intuitive and relatively straightforward to analyze, it is not the only way to construct efficient estimators of pathwise differentiable functionals in nonparametric models. For example, one alternative (which sometimes reduces to one-step estimation) is to solve an \emph{estimating equation} of the form 
$$ \Pn\{ \varphi(z;\widehat\Pb,\psi) \} = 0 $$
in $\psi$, where we write the influence curve as $\varphi(z;\widehat\Pb)=\varphi(z;\widehat\Pb,\psi)$ to stress that in general it depends on the parameter of interest $\psi$. Of course if the influence curve is linear in the parameter, i.e., taking the form $\varphi(z;\Pb) = \phi(z;\Pb) - \psi(\Pb)$, then the estimating equation approach is equivalent to the one-step correction; however in the general nonlinear case these could lead to distinct estimators. Another alternative to one-step correction is to use targeted maximum likelihood estimation (TMLE) \citep{van2006targeted, van2011targeted}. TMLE does not correct bias on the parameter scale by adding an estimate of bias to the plug-in estimator; instead, it aims to correct bias on the distributional scale, by constructing a fluctuated estimate $\widehat\Pb^*$ for which $\Pn\{ \varphi(Z;\widehat\Pb^*) \} \approx 0$, so that 
$$ \psi(\widehat\Pb^*) \approx \psi(\widehat\Pb^*) + \Pn\{ \varphi(Z;\widehat\Pb^*) \} , $$
i.e., a plug-in estimator based on the fluctuated distribution $\widehat\Pb^*$ solves the efficient influence curve estimating equation and behaves like a one-step estimator asymptotically. Despite its asymptotic equivalence to the one-step estimator, an argument for using TMLE is that it could give better finite-sample properties, for example if $\psi(\Pb)$ and $\psi(\widehat\Pb^*)$ are bounded, e.g., in $[0,1]$. A simple one-step estimator can potentially lie outside such bounds on the parameter space depending on the behavior of the correction term $\Pn\{ \varphi(z;\widehat\Pb)\}$. 
\end{remark}

\bigskip

Based on the decomposition \eqref{eq:decomp}, the task of analyzing the one-step estimator $\widehat\psi$ (e.g., deriving its rate of convergence and limiting distribution, and determining if and when it attains the nonparametric effiency bound of Theorem \ref{thm:minimax}) boils down to understanding the behavior of the empirical process term $T_1$ and bias term $T_2$. \\

In particular, when the $T_1$ and $T_2$ terms are of the order $o_\Pb(1/\sqrt{n})$, then the sample average term $S^*$ dominates the decomposition, and so
$$ \sqrt{n}( \widehat\psi - \psi )= \sqrt{n}S^* + o_\Pb(1) \ \indist \ N\Big(0,\var\{\varphi(Z;\Pb)\} \Big) $$
by the central limit theorem and Slutsky's theorem. Such a conclusion would yield several crucial insights, including:
\begin{enumerate}
\item $\widehat\psi$ is root-n consistent, 
\item $\widehat\psi$ is asymptotically normal, with asymptotically valid 95\% confidence intervals for $\psi$ given by the closed-form expression $\widehat\psi \pm 1.96 \sqrt{\widehat\var\{\varphi(Z;\widehat\Pb)\}/n}$, 
\item $\widehat\psi$ is efficient in the local asymptotic minimax sense of Theorem \ref{thm:minimax}.
\end{enumerate}

\bigskip

The next two subsections detail conditions under which the terms $T_1$ and $T_2$ can be negligible relative to $S^*$, even in large nonparametric models where one only assumes some smoothness or sparsity, for example. \\

\subsection{Empirical Process Term $T_1$}

There are two main approaches for arguing that the empirical process term 
$$ T_1 = (\Pn-\Pb) \{ \varphi(Z;\widehat\Pb) - \varphi(Z;\Pb)\} $$ 
is of the order $o_\Pb(1/\sqrt{n})$: one is based on assuming  the function class $\{\varphi(z; P) : P \in \mathcal{P}\}$ and corresponding estimators are not too complex (e.g., Donsker), and the other is to use sample splitting. Both approaches can be viewed as ways to avoid a certain kind of overfitting, as will be discussed in detail shortly. \\

 Regardless of which of these two approaches is used, at minimum it is generally also required that $\varphi(Z; \widehat\Pb)$ be converging to $\varphi(Z;\Pb)$ in $L_2(\Pb)$ norm, i.e., that
\begin{equation} \label{eq:t1consistency}
\| \varphi(;\widehat\Pb) - \varphi(;\Pb) \|^2 \equiv \int \Big\{ \varphi(z;\widehat\Pb) - \varphi(z;\Pb) \Big\}^2 \ d\Pb(z) = o_\Pb(1) . 
\end{equation}
Some intuition for this latter requirement is that if $\varphi(Z; \widehat\Pb)$ is converging to $\varphi(Z;\Pb)$, then $T_1$ is a sample average of a quantity tending to zero, and so would not only be bounded after scaling by $\sqrt{n}$ (i.e., of order $O_\Pb(1/\sqrt{n})$), but tending to zero in probability (i.e., of order $o_\Pb(1/\sqrt{n})$). In general this would be satisfied if $\widehat\Pb$ converges to $\Pb$ (or for relevant estimated components appearing in $\varphi$) and if $\varphi(;P)$ is smooth in $P$.  The next example illustrates with the average treatment effect functional. \\

\begin{example}[continues=ex:ate]
For the  average treatment effect functional $\psi = \E\{\E(Y \mid X,A=1)\}$ the one-step estimator is given by
$$ \widehat\psi = \Pn \left[\widehat\mu(X) + \frac{A\{Y - \widehat\mu(X)\}}{\widehat\pi(X)} \right] $$
and we have 
\begin{align*}
T_1 &= (\Pn-\Pb) \left[\widehat\mu(X) + \frac{A\{Y - \widehat\mu(X)\}}{\widehat\pi(X)} -  \mu(X) - \frac{A\{Y - \mu(X)\}}{\pi(X)}   \right] \equiv (\Pn-\Pb) \{ \widehat{f}(Z) - f(Z)\}
\end{align*}
since $(\Pn-\Pb)(\widehat\psi-\psi) = (\widehat\psi-\psi)(\Pn-\Pb)(1)= 0$. Now note that
\begin{align*}
\widehat{f}-f &= \left( 1- \frac{A}{\pi} \right) (\widehat\mu-\mu)+  \frac{A(Y - \widehat\mu)}{\widehat\pi \pi} ( \pi - \widehat\pi  ) 
\end{align*}
and so one set of simple sufficient conditions for \eqref{eq:t1consistency} to hold for $(\widehat{f}-f)$ is that
\begin{enumerate}
\item $\pi(x) \geq \epsilon$ and $\widehat\pi(x) \geq \epsilon$ with probability one, for some $\epsilon>0$,
\item $|Y - \widehat\mu| \leq C$  with probability one, for some $C < \infty$, and
\item $\| \widehat\mu - \mu \| = o_\Pb(1)$ and $\| \widehat\pi - \pi \| = o_\Pb(1)$,
\end{enumerate}
since under these conditions it follows that
\begin{align*}
\|\widehat{f}-f \| & \leq \left( 1 + \frac{1}{\epsilon} \right) \| \widehat\mu-\mu\| + \left(  \frac{C}{\epsilon^2} \right) \| \pi - \widehat\pi  \| . 
\end{align*}
Note boundedness of $|Y-\widehat\mu|$ could be relaxed to bounded moment conditions, as usual (e.g., using Holder's inequality).
\end{example}

\bigskip

\begin{remark}
For so-called doubly robust influence functions,  it can in some cases be enough to argue $\overline{T}_1=o_\Pb(1/\sqrt{n})$ for $\overline{T}_1=(\Pn-\Pb) \{ \varphi(Z;\widehat\Pb) - \varphi(Z;\overline\Pb)\} $, where only some components of $\overline\Pb$ equal $\Pb$, while others can merely be set to whatever corresponding estimators converge to. For the average treatment effect functional, for example, one may only have consistency of $\widehat\pi$ but not $\widehat\mu$, in which case one could define $\varphi(Z;\overline\Pb)= \overline\mu(X) + \frac{A\{Y - \overline\mu(X)\}}{\pi(X)} $, where $\overline\mu \neq \mu$ is defined as the misspecified limit of $\widehat\mu$. However, in this case the influence function $\varphi(Z;\overline\Pb)$ would not be the efficient one, and in general the $T_2$ term would be too large to be of order $o_\Pb(1/\sqrt{n})$, and so would contribute to the limiting distribution. If $\widehat\pi$ and $\widehat\mu$ were estimated with parametric models, the contribution from the $T_2$ term could behave like a sample average asymptotically, but for nonparametric estimators this would not hold in general, and so there the rate of convergence would in general degrade from $1/\sqrt{n}$ to something slower, depending on the rate at which $\pi$ was estimated. 
\end{remark}

\bigskip

Now we briefly describe a first approach for analyzing the empirical process term $T_1$ in nonparametric models, which is based on assuming the function class $\{\varphi(z; P) : P \in \mathcal{P}\}$ and corresponding estimators are not too complex (e.g., Donsker). We only briefly describe this approach for two primary reasons: (i) as the more classical approach, there are already widely available references \citep{andrews1994empirical, van1996weak, van2000asymptotic, van2002semiparametric, kosorok2008introduction, kennedy2016semiparametric}, (ii) the second way to control the term $T_1$, using sample splitting, is much simpler and requires weaker assumptions. Nonetheless we give some intuition for the main idea here. \\

Intuitively, when not using sample splitting (i.e., when $\widehat\Pb$ is estimated on the same sample on which $\Pn$ operates),  the bias correction in the one-step estimator \eqref{eq:onestep} is using the same data twice, for two different tasks, in a kind of ``double-dipping''. One task is to construct relevant nuisance components in $\widehat\Pb$, and the other to estimate the bias term $\Pn\{ \varphi(Z;\widehat\Pb)\}$. This double-dipping introduces a threat of overfitting. A nice illustration of this overfitting can be found in Figure 2 of \citet{chernozhukov2018double}, and the surrounding discussion. One natural way to avoid overfitting in general is to not fit overly complex models; this is precisely what a Donsker-type assumption on the complexity of $\Pb$ and $\widehat\Pb$ achieves. In particular, Donsker classes include smooth parametric models, but also  bounded monotone functions, smooth functions with bounded partial derivatives, Sobolev classes, functions with bounded sectional variation, etc.\ (\citet{van2000asymptotic} has a nice review in Chapter 19). \\

However, Donsker assumptions can still be restrictive, as noted for example by \citet{robins2008higher} (Remark 2.8), \citet{zheng2010asymptotic}, and \citet{chernozhukov2018double}, and may require avoiding commonly used methods such as lasso or random forests. \citet{chernozhukov2018double} points out how high-dimensional models can fail to be Donsker, and more generally have large entropy unless one imposes potentially overly strict sparsity assumptions. \\

Luckily, there is a straightforward alternative to employing Donsker-type conditions, which is simpler to analyze despite requiring weaker assumptions: sample-splitting (and its swapped analog, now commonly referred to as cross-fitting). Sample-splitting allows one to completely avoid complexity restrictions (only requiring consistency \eqref{eq:t1consistency}), but also greatly simplifies proofs. The latter advantage seems to have driven its initial use in functional estimation problems, as in, e.g., \citet{hasminskii1978nonparametric},  \citet{pfanzagl1982contributions}, \citet{schick1986asymptotically}, \citet{bickel1988estimating}, etc. It is important to note that although using sample-splitting and cross-fitting for handling terms like $T_1$ has become popular recently, it does have a long history, going back nearly half a century, at least. \\

Using sample splitting has some straightforward and simple intuition in this context:  to avoid the overfitting that can come with the aforementioned ``double-dipping'' (i.e., using the data twice, once to estimate $\widehat\Pb$ and again to estimate the mean $\Pn\{ \varphi(Z;\widehat\Pb)\}$), just formally separate these two estimation tasks, performing them on different independent samples. \\

More specifically, cross-fitting works as follows. First randomly split observations $Z^n = (Z_1,...,Z_n)$ into $K$ disjoint  folds. This can be formalized notationally via $n$ realizations of a random variable $F \in \{1,...,K\}$, drawn independently of the data $Z^n$, where $F_i=k$ means subject $i$ is assigned to fold $k$. Then we can let $\widehat\Pb_{-k}$ denote an estimator of $\Pb$ (or its relevant components appearing in the influence curve $\varphi$) that only uses observations $F_i \neq k$, i.e., excludes fold $k$. Note that there will be $K$ different such estimators, since there are $K$ folds.  Then, rather than constructing the estimator in \eqref{eq:onestep}, where $\widehat\Pb$ and $\Pn$ are built from and operator on the same sample, instead one constructs the estimator
\begin{equation} \label{eq:cfonestep}
\widehat\psi = \sum_{k=1}^K \left( \frac{N_k}{n} \right) \widehat\psi_k 
\end{equation}
where $N_k = \sum_i \one(F_i=k)$ is the number of observations in the $k$th fold and 
\begin{equation}
\widehat\psi_k =\psi(\widehat\Pb_{-k}) + \Pn^k\Big\{ \varphi(Z;\widehat\Pb_{-k}) \Big\}
\end{equation}
is the usual one-step estimator in the $k$th fold, with $\Pn^k f(Z) = N_k^{-1} \sum_{F_i =k} f(Z_i) $ the empirical measure over the $k$th fold. Then for each $\widehat\psi_k$ the decomposition \eqref{eq:decomp} becomes
\begin{align}
\widehat\psi_k - \psi &= \psi(\widehat\Pb_{-k}) + \Pn^k\Big\{ \varphi(Z;\widehat\Pb_{-k}) \Big\} - \psi(\Pb)  \nonumber \\
&=  (\Pn^k-\Pb) \{ \varphi(Z;\widehat\Pb_{-k}) \} + R_2(\widehat\Pb_{-k},\Pb) \nonumber \\
&=  (\Pn^k-\Pb) \{  \varphi(Z;\Pb)\} + (\Pn^k-\Pb) \{ \varphi(Z;\widehat\Pb_{-k}) - \varphi(Z;\Pb)\} + R_2(\widehat\Pb_{-k},\Pb) \nonumber \\
&\equiv S_k^* + T_{1k} + T_{2k} \label{eq:decomp2}
\end{align}
by the exact same logic as before, and similarly
\begin{align*}
\widehat\psi - \psi &= S^* +  \sum_{k=1}^K  \left( \frac{N_k}{n} \right) \Big( T_{1k} + T_{2k} \Big) \equiv S^* + T_1 + T_2
\end{align*}
which follows since $ \sum_{k=1}^K  \left( \frac{N_k}{n} \right) S_k^* = (\Pn-\Pb) \{  \varphi(Z;\Pb)\} = S^*$. If the number of folds $K$ is finite, then the order of the terms $\sum_{k=1}^K  (N_k/n) ( T_{1k} + T_{2k} )$ is the same as $\max_k (T_{1k} + T_{2k})$, and one can just focus on $T_{1k}$ and $T_{2k}$ separately.  Note the number of folds $K$ would not be finite if it scaled with $n$, as in leave-one-out cross-validation, which requires a different analysis. \\

\begin{remark}
Typically one uses equally sized folds so that $N_k=n/K$, at least approximately, in which case $\widehat\psi$ is just an average of the fold-specific estimators $\widehat\psi_k$, and similarly $T_1$ and $T_2$ are averages of the $T_{1k}$ and $T_{2k}$ terms, respectively. 
\end{remark}

\bigskip

Here we assume a fixed number of folds $K$, and so can analyze the terms $T_{1k}$ and $T_{2k}$ on their own (the former here, and the latter in the next subsection). By virtue of the sample splitting, somewhat remarkably, a simple bias-variance analysis combined with Chebyshev's inequality is enough to control the $T_{1k}$ terms. This is illustrated in the following lemma from \citet{kennedy2020sharp}, though the same ideas are found in aforementioned earlier work using sample splitting as well. \\

\begin{lemma}[\citet{kennedy2020sharp}] \label{splitlem}
Let $\widehat{f}(z)$ be a function estimated from a sample $Z^N=(Z_{n+1},\ldots,Z_N)$, and let $\Pn$ denote the empirical measure over $(Z_1,\ldots,Z_n)$, which is independent of $Z^N$. Then 
$$ (\Pn-\Pb) (\widehat{f}-f) = O_\Pb\left( \frac{ \| \widehat{f}-f \| }{\sqrt{n}}  \right) . $$ 
\end{lemma}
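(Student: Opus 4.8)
The plan is to condition on the auxiliary sample $Z^N$ that determines $\widehat f$ and then run a conditional Chebyshev argument; the entire purpose of the independence between $\widehat f$ and $\Pn$ is that, given $Z^N$, the estimated function $\widehat f$ is \emph{fixed}, so $g \equiv \widehat f - f$ becomes a deterministic function and $(\Pn - \Pb)(g)$ is simply a centered average of i.i.d.\ terms. No Donsker or entropy condition enters anywhere, which is exactly the advantage of sample splitting being advertised in the surrounding text.

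First I would compute the conditional mean and variance. Writing $(\Pn - \Pb)(g) = \frac1n \sum_{i=1}^n \{ g(Z_i) - \Pb(g)\}$ and using that each $Z_i$ is independent of $Z^N$ (so conditionally $Z_i \sim \Pb$), the conditional mean vanishes,
$$ \E\{ (\Pn - \Pb)(g) \mid Z^N \} = 0 , $$
and by conditional independence of the summands the conditional variance is
$$ \var\{ (\Pn - \Pb)(g) \mid Z^N \} = \frac1n \var\{ g(Z_1) \mid Z^N \} \le \frac1n \E\{ g(Z_1)^2 \mid Z^N \} = \frac{\| \widehat f - f \|^2}{n} , $$
since $\E\{ g(Z_1)^2 \mid Z^N\} = \int g^2 \, d\Pb = \| \widehat f - f\|^2$ by the conditioning. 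The crucial observation is that $\|\widehat f - f\|$, though random, is a \emph{function of $Z^N$} and hence deterministic given $Z^N$, so the intended normalizing constant may pass freely through the conditional probability.

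Next I would apply Chebyshev's inequality conditionally. For any $M > 0$, on the event $\|\widehat f - f\| > 0$,
$$ \Pb\left( \frac{ | (\Pn - \Pb)(\widehat f - f) | }{ \| \widehat f - f \| / \sqrt n } > M \,\Big|\, Z^N \right) \le \frac{ \var\{ (\Pn - \Pb)(g) \mid Z^N \} }{ M^2 \| \widehat f - f\|^2 / n } \le \frac{1}{M^2} , $$
and this bound is free of $Z^N$. Taking expectations over $Z^N$ via the tower property yields $\Pb\big( | (\Pn - \Pb)(\widehat f - f) | / ( \| \widehat f - f \| / \sqrt n ) > M \big) \le 1/M^2$, uniformly in $n$. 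Since $M$ can be chosen arbitrarily large, the normalized quantity is bounded in probability, which is precisely the claim $(\Pn - \Pb)(\widehat f - f) = O_\Pb( \| \widehat f - f \| / \sqrt n )$.

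The point requiring care — rather than a deep obstacle — is that the rate $\|\widehat f - f\|$ is itself random, so a naive unconditional second-moment calculation would not immediately deliver this random rate; the remedy is to keep everything conditional on $Z^N$, where $\|\widehat f - f\|$ is constant, and only integrate out at the very end. One should also dispose of the degenerate case $\|\widehat f - f\| = 0$, on which $\var\{(\Pn - \Pb)(g)\mid Z^N\} = 0$ forces $(\Pn - \Pb)(\widehat f - f) = 0$ almost surely, so the ratio never causes trouble.
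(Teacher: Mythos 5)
Your proposal is correct and follows essentially the same argument as the paper's proof: condition on $Z^N$ so that $\widehat f - f$ is fixed, verify the conditional mean is zero and the conditional variance is at most $\|\widehat f - f\|^2/n$, apply Chebyshev's inequality conditionally, and integrate out $Z^N$ by iterated expectation to conclude boundedness in probability. Your extra remark disposing of the degenerate case $\|\widehat f - f\| = 0$ is a minor refinement the paper leaves implicit, but it does not change the route.
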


\begin{proof}
First note that, conditional on $Z^N$, the term in question has mean zero since
$$ \E\Big\{ \Pn(\widehat{f}-f) \Bigm| Z^N \Big\}  = \E(\widehat{f}-f \mid Z^N) = \Pb(\widehat{f}-f) . $$
The conditional variance is
\begin{align*}
\var\Big\{ (\Pn-\Pb) (\widehat{f}-f) \Bigm| Z^N \Big\} &=  \var\Big\{ \Pn(\widehat{f}-f) \Bigm| Z^N \Big\} = \frac{1}{n} \var(\widehat{f}-f \mid  Z^N ) \leq \|\widehat{f}-f\|^2 /n .
\end{align*}
Therefore by iterated expectation and Chebyshev's inequality we have
\begin{align*}
\Pb\left\{ \frac{ | (\Pn-\Pb)(\widehat{f}-f) | }{ \| \widehat{f}-f \| / \sqrt{n} } \geq t \right\} &= \E\left[ \Pb\left\{ \frac{ | (\Pn-\Pb)(\widehat{f}-f) | }{ \| \widehat{f}-f \| / \sqrt{n} } \geq t \Bigm| Z^N \right\} \right] \leq \frac{1}{t^2} .
\end{align*}
Thus for any $\epsilon>0$ we can pick $t=1/\sqrt{\epsilon}$ so that the probability above is no more than $\epsilon$, which yields the result.
\end{proof}

\bigskip

Thus the above lemma shows how, as long as $\varphi(z;\widehat\Pb)$ is consistent for $\varphi(z;\Pb)$ in $L_2(\Pb)$ norm, and there are finitely many folds $K$, then sample-splitting/cross-fitting ensures that $T_1 = \sum_k (N_k/n) T_{1k} = o_\Pb(1/\sqrt{n})$, which is asymptotically negligible relative to the sample average term $S^*$. Importantly, no complexity-restricting conditions using Donsker classes or entropy bounds are required, which means arbitarily flexible methods can be accomodated (e.g., lasso, random forests), as long as they are consistent. Due to its simplicity, the sample-splitting-based approach is arguably also more transparent, as it only requires reasoning about means and variances, rather than Donsker classes and empirical processes. \\

We summarize the above results in the following proposition. \\

\begin{proposition} \label{prop:t1}
Let $\widehat\psi$ denote the cross-fit estimator in \eqref{eq:cfonestep}. Assume $K \leq C < \infty$ is finite, and that $\| \varphi(z; \widehat\Pb_{-k}) - \varphi(z; \Pb) \| = o_\Pb(1)$ for each $k$.  
Then
$$ \widehat\psi - \psi = (\Pn-\Pb)\{ \varphi(Z;\Pb)\} + T_2 + o_\Pb(1/\sqrt{n}) $$
for $T_2 =  \sum_{k=1}^K  \left( \frac{N_k}{n} \right) R_2(\widehat\Pb_{-k},\Pb)$ and $R_2(\overline{P},P) = \psi(\overline{P}) - \psi(P) + \int \varphi(z;\overline{P}) \ dP(z)$. 
\end{proposition}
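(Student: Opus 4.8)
The plan is to assemble the fold-wise decomposition \eqref{eq:decomp2} with Lemma \ref{splitlem}, so the work splits into three steps: (i) recall the per-fold expansion, (ii) combine folds using the weights $N_k/n$, and (iii) control the resulting empirical process term by the sample-splitting lemma. First I would start from \eqref{eq:decomp2}, namely $\widehat\psi_k - \psi = S_k^* + T_{1k} + T_{2k}$ with $S_k^* = (\Pn^k-\Pb)\{\varphi(Z;\Pb)\}$, $T_{1k} = (\Pn^k-\Pb)\{\varphi(Z;\widehat\Pb_{-k}) - \varphi(Z;\Pb)\}$, and $T_{2k} = R_2(\widehat\Pb_{-k},\Pb)$, which is already established in the text by the same argument that produced \eqref{eq:decomp}. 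Multiplying by $N_k/n$, summing over $k$, and using the identity $\sum_k (N_k/n) S_k^* = (\Pn-\Pb)\{\varphi(Z;\Pb)\} = S^*$ noted in the text, yields $\widehat\psi - \psi = S^* + T_1 + T_2$ with $T_1 = \sum_k (N_k/n) T_{1k}$ and $T_2 = \sum_k (N_k/n) R_2(\widehat\Pb_{-k},\Pb)$, the latter being exactly the claimed remainder.

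The crux is then to show $T_1 = o_\Pb(1/\sqrt{n})$. The key structural observation is that $\widehat\Pb_{-k}$ is built only from observations with $F_i \neq k$, which are independent of the fold-$k$ observations over which $\Pn^k$ averages. This is precisely the independence setup required by Lemma \ref{splitlem}: taking $\widehat{f} = \varphi(\cdot;\widehat\Pb_{-k})$, $f = \varphi(\cdot;\Pb)$, and $\Pn^k$ in the role of $\Pn$, the lemma gives $T_{1k} = O_\Pb\big( \|\varphi(z;\widehat\Pb_{-k}) - \varphi(z;\Pb)\| / \sqrt{N_k} \big)$. Invoking the proposition's consistency hypothesis $\|\varphi(z;\widehat\Pb_{-k}) - \varphi(z;\Pb)\| = o_\Pb(1)$ then upgrades this to $T_{1k} = o_\Pb(1/\sqrt{N_k})$ for each $k$.

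Finally I would propagate these rates through the weighted sum. Since $(N_k/n)\cdot N_k^{-1/2} = \sqrt{N_k}/n \leq 1/\sqrt{n}$, each weighted term satisfies $(N_k/n) T_{1k} = o_\Pb(1/\sqrt{n})$, and because $K \leq C < \infty$ is a fixed finite number, the sum of these finitely many terms remains $o_\Pb(1/\sqrt{n})$; this is exactly where finiteness of $K$ enters (a growing number of folds, as in leave-one-out, would fall outside this argument and need separate treatment). Substituting $T_1 = o_\Pb(1/\sqrt{n})$ back into $\widehat\psi - \psi = S^* + T_1 + T_2$ delivers the stated expansion. I do not expect a serious obstacle, since every ingredient is already in place; the only points demanding care are verifying the independence structure so that Lemma \ref{splitlem} applies conditionally on the out-of-fold sample, and confirming that the finite-$K$ weighting does not degrade the $1/\sqrt{n}$ rate — both dispatched by the elementary bounds above rather than by any empirical-process machinery.
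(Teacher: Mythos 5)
Your proof is correct and follows essentially the same route as the paper: the fold-wise decomposition \eqref{eq:decomp2}, the identity $\sum_k (N_k/n) S_k^* = S^*$, and Lemma \ref{splitlem} applied with $\Pn^k$ and the out-of-fold estimator $\widehat\Pb_{-k}$, with finiteness of $K$ ensuring the weighted sum of the $T_{1k}$ terms stays $o_\Pb(1/\sqrt{n})$. No gaps; your explicit bookkeeping of the rate via $(N_k/n)\cdot N_k^{-1/2} \leq 1/\sqrt{n}$ is just a slightly more explicit version of the paper's observation that the finite-$K$ weighted sum has the same order as $\max_k T_{1k}$.
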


\bigskip

\subsection{Remainder Bias Term $T_2$}

We are now a step closer to obtaining estimators that are: (i) root-n consistent, (ii) asymptotically normal, and (iii) efficient in the local asymptotic minimax sense of Theorem \ref{thm:minimax}. The last step is to analyze the bias term $T_2$, which typically needs to be studied on a case-by-case basis. This $T_2$ term is what makes bias-corrected estimators like \eqref{eq:onestep} and \eqref{eq:cfonestep} special, e.g., allows them to be asymptotically unaffected by nuisance estimation. For example, simple plug-in estimators would have similar decompositions as in Proposition \ref{prop:t1}, with similarly small $T_1$ terms, but the analog of the $T_2$ term would in general never be $o_\Pb(1/\sqrt{n})$. In contrast, for influence function-based estimators given above,  the $T_2$ term just equals
$$ T_2 = R_2(\widehat\Pb,\Pb) $$
for $R_2(\overline{P},P) = \psi(\overline{P}) - \psi(P) + \int \varphi(z;\overline{P}) \ dP(z)$
 the remainder of the distributional Taylor expansion \eqref{eq:vonmises} (or $T_2 =  \sum_{k=1}^K  ( \frac{N_k}{n} ) R_2(\widehat\Pb_{-k},\Pb)$ in the cross-fitting case, which is basically the same, as we assume $K$ is finite throughout). Therefore the bias term $T_2$ is essentially a byproduct of deriving and verifying the expansion \eqref{eq:vonmises}, and in such cases will involve second-order \emph{products} of differences between $\widehat\Pb$ and $\Pb$. Thus if each such error is of the order $n^{-1/4}$, the product of will be of the order $1/\sqrt{n}$. In what follows we illustrate with several examples. \\

\begin{example}[continues=ex:ate]
For the average treatment effect or missing outcome functional 
$$ \psi(P) = \E_P \{ \E_P(Y \mid X, A=1)\} $$
the remainder in \eqref{eq:vonmises} is given by
$$ R_2(\overline{P},P) = \int \left\{ \frac{1}{\overline\pi(x)} - \frac{1}{\pi(x)} \right\} \Big\{ \mu(x) - \overline\mu(x) \Big\} \pi(x) \ dP(x) $$
where $\pi(x) = P(A=1 \mid X=x)$ and $\overline\pi(x) = \overline{P}(A=1 \mid X=x)$, and similarly for $\mu(x) = \E_P(Y \mid X=x,A=1)$. Therefore if $\widehat\pi(x) \geq \epsilon$ with probability one we have
$$ | R_2(\widehat\Pb,\Pb) | \leq  \left( \frac{1}{\epsilon} \right) \int| \pi(x) - \widehat\pi(x) | | \mu(x) - \widehat\mu(x) | \ d\Pb(x) \leq \left( \frac{1}{\epsilon} \right) \| \widehat\pi - \pi \| \| \widehat\mu - \mu \| $$
by Cauchy-Schwarz. Thus if $\|\widehat\pi - \pi \| = o_\Pb(n^{-1/4})$ and $\|\widehat\mu - \mu \| = o_\Pb(n^{-1/4})$, for example, then $T_2 = o_\Pb(1/\sqrt{n})$, as desired (though note that $\pi$ and $\mu$ do not have to be estimated at the same rates for $T_2=o_\Pb(1/\sqrt{n})$ to hold: any such combination whose product is $o_\Pb(1/\sqrt{n})$ would suffice). Conditions under which $L_2(\Pb)$ errors satisfy, e.g., $\|\widehat\pi - \pi \| = o_\Pb(n^{-1/4})$ are available for many popular estimators.  For example, if $\pi$ is $s$-smooth (i.e., contained in a \Holder{} class with index $s$, so that all partial derivatives up to order $s-1$ are bounded and the highest order are continuous), and $\widehat\pi$ is a minimax optimal estimator (e.g., using local polynomials) then with appropriate tuning 
$$ \| \widehat\pi - \pi \| = O_\Pb\left(n^{-\frac{1}{2+d/s}} \right) $$
\citep{gyorfi2002distribution, tsybakov2009introduction}, and so for example the rate would be $o_\Pb(n^{-1/4})$ if $d/s<2$, i.e., the smoothness was more than half the dimension. Similarly, if $\pi$ is $s$-sparse and estimated say via lasso at a rate like
$$ \| \widehat\pi - \pi \| = O_\Pb\left( \sqrt{ \frac{s \log d}{n} } \right) $$
\citep{farrell2015robust, chernozhukov2018double, bradic2019sparsity}, then  the rate would be $o_\Pb(n^{-1/4})$ if $s =o(\sqrt{n}/\log d)$, i.e., the sparsity $s$ scales slower than $\sqrt{n}$ up to log factors. Similar results can be obtained for random forests, neural networks, etc., under appropriate conditions \citep{farrell2021deep}.
\end{example}

\bigskip

\begin{example}[continues=ex:cov] For the expected conditional covariance functional
$$ \psi(P) = \E_P\{ \cov_P(A,Y \mid X) \}$$
the remainder in \eqref{eq:vonmises} is given by
$$ R_2(\overline{P},P) = \int \Big\{ \overline\pi(x) - \pi(x) \Big\} \Big\{ \overline\mu(x) - \mu(x) \Big\}   \ dP(x) $$
where $\pi(x) = \E_P(A \mid X=x)$ and  $\mu(x) = \E_P(Y \mid X=x)$. Therefore 
$$ | R_2(\widehat\Pb,\Pb) |  \leq  \| \widehat\pi - \pi \| \| \widehat\mu - \mu \| $$
by Cauchy-Schwarz, and so this functional has the same kind of double robustness properties as the average treatment effect function in Example \ref{ex:ate}.
\end{example}

\bigskip

\begin{example}[continues=ex:expdens]
For the expected density functional
$$ \psi(P) = \E_P\{ p(Z) \} = \int p(z)^2 \ dz $$
the remainder in \eqref{eq:vonmises} is given by
$R_2(\overline{P},P) = - \int \{ \overline{p}(z) - p(z) \}^2   \ dz$, so that
$$ | R_2(\widehat\Pb,\Pb) |  = \| \widehat{p} - p \|^2 $$
where in a slight abuse of notation  $\| \cdot \|$ above denotes the $L_2(\nu)$ norm for $\nu$ the uniform measure. Thus for the expected density, the standard one-step estimator is not doubly robust, but still has nuisance errors that consist of a second-order product; therefore this estimator will still be root-n consistent, asymptotically normal, and efficient as long as the density is estimated at faster than $n^{-1/4}$ rates. 
\end{example}

\bigskip

The above examples illustrate the kinds of arguments one can use to show the bias term $T_2$ is of order $o_\Pb(1/\sqrt{n})$. For some functionals, the remainder term in the expansion \eqref{eq:vonmises} can take multiple possible forms, some of which may be more or less useful depending on context (e.g., consider the expected conditional covariance in Example \ref{ex:cov} and note that $\mu = \pi \mu_1 + (1-\pi) \mu_0$ for $\mu_a(x) = \E(Y \mid X=x, A=a)$), and in some cases the remainder terms can be quite complicated to derive (e.g., for longitudinal causal effects like Example \ref{ex:gformula} when there are more than two timepoints). \\

We summarize the above results in the following proposition. \\

\begin{proposition} \label{prop:t2}
Let $\widehat\psi$ denote the cross-fit estimator in \eqref{eq:cfonestep}. Assume $K \leq C < \infty$ is finite, $\| \varphi(z; \widehat\Pb_{-k}) - \varphi(z; \Pb) \| = o_\Pb(1)$ for each $k$, and that $T_2 \equiv  \sum_{k=1}^K  \left( \frac{N_k}{n} \right) R_2(\widehat\Pb_{-k},\Pb) = o_\Pb(1/\sqrt{n})$, where $R_2(\overline{P},P) = \psi(\overline{P}) - \psi(P) + \int \varphi(z;\overline{P}) \ dP(z)$. 
Then
$$ \widehat\psi - \psi = (\Pn-\Pb)\{ \varphi(Z;\Pb)\} + o_\Pb(1/\sqrt{n}) $$
and so $\widehat\psi$ is root-n consistent, asymptotically normal, and minimax optimal in the local asymptotic sense of Theorem \ref{thm:minimax} if $\varphi$ is the efficient influence function. 
\end{proposition}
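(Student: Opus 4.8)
The plan is to assemble the proof directly from the cross-fit decomposition \eqref{eq:decomp2} together with Proposition \ref{prop:t1}, reducing the whole statement to a standard central-limit-plus-Slutsky argument on the leading term. Starting from the identity $\widehat\psi - \psi = S^* + T_1 + T_2$ recorded just after \eqref{eq:decomp2}, where $S^* = (\Pn - \Pb)\{\varphi(Z;\Pb)\}$, I would first invoke Proposition \ref{prop:t1}, whose hypotheses (finitely many folds, and $\| \varphi(z;\widehat\Pb_{-k}) - \varphi(z;\Pb)\| = o_\Pb(1)$ for each $k$) are exactly the first two assumptions here, to conclude $T_1 = o_\Pb(1/\sqrt{n})$. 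Combining this with the third assumption $T_2 = o_\Pb(1/\sqrt{n})$ immediately yields the stated expansion
$$ \widehat\psi - \psi = (\Pn - \Pb)\{\varphi(Z;\Pb)\} + o_\Pb(1/\sqrt{n}) . $$

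Next I would read off the three conclusions from this single expansion. For root-$n$ consistency, note that $S^*$ is a centered sample average of the \emph{fixed}, finite-variance function $\varphi(Z;\Pb)$---finiteness of $\var\{\varphi(Z;\Pb)\}$ being guaranteed by the defining property $\int \varphi(z;\Pb)^2 \, d\Pb(z) < \infty$ of the influence curve in \eqref{eq:vonmises}---so $S^* = O_\Pb(1/\sqrt{n})$ by Chebyshev's inequality, and hence $\widehat\psi - \psi = O_\Pb(1/\sqrt{n})$. For asymptotic normality, multiplying through by $\sqrt{n}$ and applying the ordinary central limit theorem gives $\sqrt{n}\, S^* \indist N(0, \var\{\varphi(Z;\Pb)\})$, while the remainder is $o_\Pb(1)$, so Slutsky's theorem delivers
$$ \sqrt{n}(\widehat\psi - \psi) \indist N\Big(0, \var\{\varphi(Z;\Pb)\}\Big) . $$
Finally, if $\varphi$ is the efficient influence function, then the limiting variance $\var\{\varphi(Z;\Pb)\}$ coincides with the nonparametric efficiency bound \eqref{eq:varphi}, which is precisely the lower bound on the right-hand side of Theorem \ref{thm:minimax}; hence $\widehat\psi$ attains that benchmark and is optimal in the local asymptotic minimax sense.

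I do not anticipate a genuine obstacle, since the substantive work was already discharged in deriving the decomposition \eqref{eq:decomp2} and in proving Proposition \ref{prop:t1}; the present statement is essentially their corollary. The only point requiring a moment's care is verifying that the classical central limit theorem legitimately applies to $S^*$: this rests on the finite-variance property of $\varphi$ built into the von Mises expansion \eqref{eq:vonmises}, and on the fact that $S^*$ is a genuine average of i.i.d.\ terms evaluated at the fixed truth $\Pb$ rather than at any estimated $\widehat\Pb_{-k}$, so that no data-dependence enters the leading term and the limit is governed entirely by $\varphi(Z;\Pb)$.
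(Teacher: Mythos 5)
Your proposal is correct and follows essentially the same route as the paper: the paper treats this proposition as a direct consequence of the cross-fit decomposition \eqref{eq:decomp2} and Proposition \ref{prop:t1}, with the assumed condition on $T_2$ absorbing the remainder, and the conclusions read off from the central limit theorem, Slutsky's theorem, and Theorem \ref{thm:minimax} exactly as you describe. Your added care about the finite-variance property of $\varphi(Z;\Pb)$ and the fact that the leading term involves no estimated quantities matches the paper's own reasoning in the discussion surrounding \eqref{eq:decomp}.
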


\section{Some Extensions \& Open Problems}

Although the theory sketched in previous sections has by now a relatively long history and is quite well-developed, there are many extensions and open problems still remaining, which will be important to further develop in coming years. Here we briefly detail some recent examples.  \\

\subsection{New Functionals}
 
First, every functional has a somewhat unique remainder term $R_2$ in the expansion \eqref{eq:vonmises}, leading to corresponding unique bias term $T_2$ in the decomposition \eqref{eq:decomp} (though see \citet{rotnitzky2019characterization} for some unifying properties). Thus, as new parameters are developed in causal inference and other fields, for example for newly defined causal effects (e.g., \citet{diaz2012population, young2014identification, haneuse2013estimation, kennedy2019nonparametric} for stochastic intervention effects), or with new data structures (e.g.,  \citet{tchetgen2012causal, van2014causal, ogburn2017causal} for network data), or under new identifying assumptions (e.g., \citet{tchetgen2020introduction}), it will be crucial to understand the operating characteristics of these new quantities. This includes understanding terms like $R_2$ in \eqref{eq:vonmises} and $T_2$ in \eqref{eq:decomp}, even if this uses some already well-developed theoretical tools or arguments. One of the fascinating aspects of causal inference is that there are many different ways to characterize causal effects, each with its own nuances and subtleties. \\

\subsection{High Complexity Regimes}

In this review we focused on discussing conditions under which estimators are root-n consistent and asymptotically normal. However, when underlying nuisance functions are not smooth or sparse enough relative to the dimension, root-n consistency will be impossible to attain. For example, if a regression function is $s$-smooth with $s=5$, but the dimension of the covariates is $50$, then the minimax rate is $n^{-1/12}$, much slower than the $n^{-1/4}$ rates discussed in the previous section. This opens up several questions not addressed in this review: When  influence function-based estimators like \eqref{eq:onestep} and \eqref{eq:cfonestep} are not root-n consistent, can any other estimator be? When are root-n rates achievable? What is the best possible rate that can be achieved when root-n rates are \emph{not} achievable? Important progress along these lines has been made for complex causal effect-style functionals in the last decade by \citet{robins2008higher, robins2009quadratic, robins2017minimax}, but many open problems remain, including: (i) minimax rates for other functionals beyond the average treatment effect and expected conditional covariance, and (ii) minimax rates for models with nuisances in non-\Holder{} function spaces \citep{bradic2019minimax}. 

\subsection{Non-Pathwise Differentiable Functionals}

Throughout this review we have considered functionals that satisfy the von Mises / distributional Taylor expansion \eqref{eq:vonmises}. As explained and illustrated throughout, many important parameters in causal inference and other fields \emph{do} satisfy this expansion; however, there are also many parameters that do not. These parameters are often referred to as non-pathwise differentiable. Non-pathwise differentiability arises in at least two prominent settings: (i) non-smooth finite-dimensional parameters, and (ii) infinite-dimensional parameters. By non-smooth finite-dimensional parameters, we mean parameters that resemble those studied in this review (e.g., take values in $\R$ and, e.g.,  can be represented as expectations over nuisance functions), but involve non-differentiable functions of nuisance quantities, such as indicators, maximums, absolute values, etc. These arise often, e.g., in the optimal treatment regime literature \citep{murphy2003optimal, hirano2012impossibility, laber2014dynamic, luedtke2016statistical}. For example, under standard no unmeasured confounding and other assumptions, the value of the mean-optimal treatment regime is given by
$$ \E \Big[ \mu_1(X) \one\{ \mu_1(X) \geq \mu_0(X) \} + \mu_0(X) \one\{ \mu_1(X) < \mu_0(X) \} \Big] $$
where $\mu_a(x) = \E(Y \mid X=x, A=a)$. Since the nuisance functions $\mu_a$ appear inside non-smooth indicator functions, many arguments from the previous sections cannot be applied directly. Interestingly, though, it can be shown that the influence function for the value still exists under a margin condition \citep{van2013targeted, luedtke2016statistical}, but in general this is not the case. \\

A second place where pathwise differentiability fails is for parameters that are not simple expectations, but are instead given by infinite-dimensional curves or functions, like regression and density functions. These arise in many important and common settings in causal inference, for example: effects of continuous treatments \citep{rubin2006extending, diaz2013targeted, kennedy2017nonparametric}, counterfactual density estimation \citep{robins2001inference, kim202xcausal, kennedy202xdensity, westling2020unified},  heterogeneous effect estimation 
\citep{nie2017quasi, semenova2017estimation, foster2019orthogonal, kennedy202xoptimal, kennedy202xminimax}, etc. Interestingly, although these quantities take values in  infinite-dimensional function spaces like densities or regressions, they also involve structured combinations of nuisance quantities, and so are really regression/functional hybrids; it turns out that tools and concepts from standard functional estimation  can therefore be adapted to these settings. For example, consider the conditional average treatment effect parameter
$$ \tau(x) = \E(Y \mid X=x, A=1) - \E(Y \mid X=x, A=0); $$
the function $\tau(x)$ can be very smooth or sparse, even when the individual regression functions it differences are not. Exploiting such smoothness or sparsity when it exists requires adapting ideas described in this review to the infinite-dimensional context. We refer to  \citet{nie2017quasi, semenova2017estimation, foster2019orthogonal, kennedy202xoptimal, kennedy202xminimax} for some examples of such adaptation. \\

\section*{Acknowledgements}

EK gratefully acknowledges support from NSF Grants DMS1810979 and CAREER Award
2047444, and NIH R01 Grant LM013361-01A1. 

\section*{References}
\vspace{-1cm}
\bibliographystyle{abbrvnat}
\bibliography{/Volumes/flashdrive/research/bibliography}

\bigskip

\setlength{\parindent}{0cm}
\appendix

\section{Appendix}

\subsection{Two Derivations of the  Influence Function of the ATE}

\subsubsection{Integral Equation Approach}

To derive the influence function via the pathwise differentiability condition,  one needs to first calculate the derivative of the parameter
\begin{equation}
\frac{\partial}{\partial \epsilon} \psi(\Pb_\epsilon) \Bigm|_{\epsilon=0} \label{eq:ateder}
\end{equation}
on any smooth parametric submodel  $\Pb_\epsilon$ (satisfying $\Pb_0=\Pb$), set it equal to the inner product (i.e., covariance)
\begin{equation}
\int \varphi(z)\left( \frac{\partial}{\partial \epsilon} \log d\Pb_\epsilon(z) \right)\Bigm|_{\epsilon=0} \ d\Pb(z) \label{eq:ateinner}
\end{equation}
and solve this integral equation for the influence function $\varphi(z)$. In what follows we tackle these two tasks separately, for the average treatment effect parameter. \\

\subsubsection*{Derivative of parameter}

First we will compute the derivative of the parameter in \eqref{eq:ateder} for $\Pb_\epsilon$ a generic smooth parametric submodel satisfying $\Pb_0=\Pb$. Note that the submodel score decomposes as a sum
\begin{align*}
s_\epsilon(z) \equiv \frac{\partial}{\partial \epsilon} \log d\Pb_\epsilon(z)  &= \frac{\partial}{\partial \epsilon} \log d\Pb_\epsilon(y \mid x,a) + \frac{\partial}{\partial \epsilon} \log d\Pb_\epsilon(a \mid x) + \frac{\partial}{\partial \epsilon} \log d\Pb_\epsilon(x)  \\
&\equiv s_\epsilon(y \mid x,a) + s_\epsilon(a \mid x) + s_\epsilon(x) .
\end{align*}
Therefore, using the definition of the parameter, its derivative equals
\begin{align*}
\frac{\partial}{\partial \epsilon} \psi(\Pb_\epsilon) &= \frac{\partial}{\partial \epsilon} \int \int y \ d\Pb_\epsilon(y \mid x,1) \ d\Pb_\epsilon(x) \\
&= \int \int y \  \frac{\partial}{\partial \epsilon} d\Pb_\epsilon(y \mid x,1) \ d\Pb_\epsilon(x) +
		 \int \int y \  d\Pb_\epsilon(y \mid x,1) \ \frac{\partial}{\partial \epsilon} d\Pb_\epsilon(x) \\
&= \int \int y \  s_\epsilon(y \mid x, 1) \ d\Pb_\epsilon(y \mid x,1) \ d\Pb_\epsilon(x)  + \int \int y \  d\Pb_\epsilon(y \mid x,1) \ s_\epsilon(x) \ d\Pb_\epsilon(x) ,
\end{align*}
where in the third equality we used the derivative of a logarithm, i.e., that $\frac{\partial}{\partial \epsilon} \log f_\epsilon(z) = \frac{\partial}{\partial \epsilon} f_\epsilon(z)/f_\epsilon(z)$. Therefore the derivative of the ATE on a submodel at $\epsilon=0$ is given by
\begin{equation}
\frac{\partial}{\partial \epsilon} \psi(\Pb_\epsilon) \Bigm|_{\epsilon=0} = \int \int \Big\{  y \ s_0(y \mid x, 1) + \mu(x) s_0(x) \Big\}  \ d\Pb(y \mid x, 1) \ d\Pb(x) . \label{eq:atelhs}
\end{equation}
Now our task is to write this derivative in inner product form as in \eqref{eq:ateinner}. \\

\subsubsection*{Expressing as inner product}

Specifically here we aim to write the derivative \eqref{eq:atelhs} in the form
\begin{equation*}
\int \varphi(z) \Big\{ s_0(y \mid x,a) + s_0(a \mid x) + s_0(x)  \Big\} \ d\Pb(z) \label{eq:aterhs}
\end{equation*}
for a mean zero function $\varphi$, i.e., we need to solve the integral equation
\begin{equation}
\int \left\{  \int y \ s_0(y \mid x, 1) \ d\Pb(y \mid x, 1)  + \mu(x) s_0(x) \right\}  \ d\Pb(x) = \int \varphi(z)  s_0(z)   \ d\Pb(z) \label{eq:ateinteq}
\end{equation}
for $\varphi$. This does not have an obvious solution if the correct form for $\varphi$ is not already known (in which case one can just check that the equation holds). A potentially time-consuming and error-bound path is to keep conjecturing candidates for $\varphi$, until one works. \\

Instead it can be helpful to start by decomposing the influence function as $\varphi(z) = \varphi_y(y,x,a) + \varphi_a(a,x) + \varphi_x(x)$
where
\begin{align} \label{eq:ateifdecomp}
\int \varphi_y(y,x,a) \ d\Pb(y \mid x,a) &= 0 \nonumber \\
\int \varphi_a(a,x) \ d\Pb(a \mid x) &= 0 \\
\int \varphi_x(x) \ d\Pb(x) &= 0 . \nonumber
\end{align}
Note that this decomposition holds for any random variable $\varphi=\varphi(Z)$ by simply centering appropriately, i.e., defining $\varphi_y(Y,X,A)=\varphi - \E(\varphi \mid X,A)$, $\varphi_a(A,X) = \E(\varphi \mid X,A) - \E(\varphi \mid X)$, and $\varphi_x(X) = \E(\varphi \mid X) - \E(\varphi)$.  Importantly, with this decomposition,  the inner product on the right-hand side of \eqref{eq:ateinteq} simplifies to
\begin{align*}
\int  & \Big\{  \varphi_y(y,x,a) + \varphi_a(a,x) + \varphi_x(x) \Big\} \Big\{ s_0(y \mid x,a) + s_0(a \mid x) + s_0(x)  \Big\} \ d\Pb(z) \\
& = \int \Big\{ \varphi_y(y,x,a) s_0(y \mid x,a)  + \varphi_a(a,x) s_0(a \mid x) + \varphi_x(x) s_0(x) \Big\} \ d\Pb(z) 
\end{align*}
by virtue of the restrictions in \eqref{eq:ateifdecomp}, and the fact that $s_0(y \mid x,a)$, $s_0(a \mid x)$, and $s_0(x)$ are score functions and so similarly have conditional mean zero, i.e., 
\begin{align*}
\int s_0(y \mid x,a) \ d\Pb(y \mid x,a) &= 0  \\
\int s_0(a \mid x) \ d\Pb(a \mid x) &= 0 \\
\int s_0(x) \ d\Pb(x) &= 0 . 
\end{align*}
By using the decomposition \eqref{eq:ateifdecomp} we have essentially transformed our problem from solving one big integral equation to solving three smaller and easier integral equations. Specifically we now need to solve for $(\varphi_y,\varphi_a,\varphi_x)$ in 
\begin{align*}
& \int \int \Big\{  y \ s_0(y \mid x, 1) + \mu(x) s_0(x) \Big\}  \ d\Pb(y \mid x, 1) \ d\Pb(x) \\
& \hspace{.5in} = \int \Big\{ \varphi_y(y,x,a) s_0(y \mid x,a)  + \varphi_a(a,x) s_0(a \mid x) + \varphi_x(x) s_0(x) \Big\} \ d\Pb(z) 
\end{align*}
This immediately leads us to the choices $\varphi_a(a,x) = 0$ and $\varphi_x(x)=\mu(x) - \E\{\mu(X)\}=\mu(x)-\psi$. \\

The remaining task is to find $\varphi_y(y,x,a)$, which based on the above amounts to solving the integral equation  
\begin{equation*}
 \int \int y \ s_0(y \mid x, 1) \ d\Pb(y \mid x, 1) \ d\Pb(x) = \int  \varphi_y(y,x,a) s_0(y \mid x,a)  \ d\Pb(z) 
\end{equation*}
Note the derivative term on the left-hand side above equals
\begin{align*}
 \int \int y \ s_0(y \mid x, 1) \ d\Pb(y \mid x, 1) \ d\Pb(x) &= \int  \int \int y \ s_0(y \mid x, 1) \ d\Pb(y \mid x, 1) \ d\Pb(a \mid x) \ d\Pb(x) \\
 &= \int  \int \int \frac{ay}{\pi(x)} \ s_0(y \mid x, a) \ d\Pb(y \mid x, a) \ d\Pb(a \mid x) \ d\Pb(x)
\end{align*}
where in the first equality we simply introduced the treatment distribution $d\Pb(a \mid x)=a \pi(x) + (1-a)(1-\pi(x))$, and in the second we multiplied by the indicator $a$ to pick out the required $s_0(y \mid x,1)$ term, and then divided by $\pi(x)$ to cancel out the additional $\pi(x)$ term that appears due to averaging the indicator $a$. We note this logic is perhaps still a bit mysterious for those who have not used it before.  The term $ay/\pi(x)$ has the right mean, but needs to be centered so that it has conditional mean zero and thus satisfies the first line of \eqref{eq:ateifdecomp}. This leads to the choice
$$ \varphi_y(y,a,x) = \frac{ay}{\pi(x)} - \int \frac{ay}{\pi(x)} \ d\Pb(y \mid x,a) = \frac{a}{\pi(x)} \Big\{ y - \mu(x) \Big\} . $$
Combining with $\varphi_a(a,x)$ and $\varphi_x(x)$ above gives the overall influence function
\begin{equation}
\varphi(Z) = \frac{A}{\pi(X)} \Big\{ Y - \mu(X) \Big\} + \mu(X) - \psi .
\end{equation}
Since the work above shows that this satisfies the pathwise differentiability condition for any sufficiently smooth parametric submodel, it is the influence function. \\

\subsubsection{Gateaux Derivative Approach}

The Gateaux derivative approach can be viewed as a special case of the above, where one uses a particular choice of parametric submodel, for which the pathwise derivative is actually equal to the influence function, rather than an integral equation that needs to be solved. This is accomplished by using a submodel whose score is a point mass, as described in the main text. \\

One simple such submodel  is given by $\Pb^*_\epsilon(z) = (1-\epsilon) \Pb(z) + \epsilon \delta_{Z}$, where $\delta_{Z}$ is the Dirac measure at $z=Z$. Since  $z$ is discrete, we can just work with the mass function $p_\epsilon^*(z)=(1-\epsilon) p(z) + \epsilon \one(z=Z)$. 
First note that for the submodel $p_\epsilon^*(z)$ we have
\begin{align*}
p_\epsilon^*(y \mid x,a) &= \frac{p_\epsilon^*(z)}{p_\epsilon^*(a,x)} = \frac{(1-\epsilon) p(z) + \epsilon \one(z=Z)}{(1-\epsilon) p(a,x) + \epsilon \one(a=A,x=X)} \\
p_\epsilon^*(a \mid x) &= \frac{p_\epsilon^*(a,x)}{p_\epsilon^*(x)} = \frac{(1-\epsilon) p(a,x) + \epsilon \one(a=A,x=X)}{(1-\epsilon) p(x) + \epsilon \one(x=X)} \\
p_\epsilon^*(x) &= (1-\epsilon) p(x) + \epsilon \one(x=X)
\end{align*}
and
\begin{align}
\frac{\partial}{\partial \epsilon} p_\epsilon^*(y \mid x,a) \Bigm|_{\epsilon=0} &= \frac{ \one(z=Z) - p(z)}{(1-\epsilon) p(a,x) + \epsilon \one(a=A,x=X)}\Bigm|_{\epsilon=0} \nonumber \\
& \hspace{.5in} - p_\epsilon^*(y \mid x,a) \frac{\one(a=A,x=X)-p(a,x)}{(1-\epsilon) p(a,x) + \epsilon \one(a=A,x=X)} \Bigm|_{\epsilon=0} \nonumber \\
&=  \frac{ \one(z=Z) - p(z)}{ p(a,x)}  - p(y \mid x,a) \frac{\one(a=A,x=X)-p(a,x)}{ p(a,x)  } \nonumber \\
&= \one(a=A,x=X) \left\{ \frac{ \one(y=Y) - p(y \mid x,a)}{p(a,x)}    \right\} \label{eq:condensder}
\end{align}
where the first equality follows from the chain rule, and the rest by just rearranging. \\

Now we evaluate the parameter on the submodel, differentiate, and set $\epsilon=0$, which  gives 
\begin{align*}
\frac{\partial}{\partial \epsilon} \psi(p^*_\epsilon) \Bigm|_{\epsilon=0} &= \frac{\partial}{\partial \epsilon}  \sum_{x,y} y \ p_\epsilon^*(y \mid x,1) \ p_\epsilon^*(x) \Bigm|_{\epsilon=0}  \\
&=   \sum_{x,y} y \left\{ \frac{\partial}{\partial \epsilon} p_\epsilon^*(y \mid x,1) \ p_\epsilon^*(x) + p_\epsilon^*(y \mid x,1) \ \frac{\partial}{\partial \epsilon}  p_\epsilon^*(x) \right\} \Bigm|_{\epsilon=0}  \\
&=  \sum_{x,y} y \left[   \one(1=A,x=X) \left\{ \frac{ \one(y=Y) - p(y \mid x,1)}{p(1,x)}    \right\}p(x)
+ p(y \mid x,1) \Big\{ \one(x=X) - p(x) \Big\} \right] \\
&=  \frac{A}{\pi(X)} \Big\{ Y - \mu(X) \Big\} + \mu(X) - \psi 
\end{align*}
where the second equality follows by the chain rule, the third by substituting in the expression in \eqref{eq:condensder}, and the fourth rearranging. \\

Therefore the Gateaux derivative approach gives the same influence function as the more involved integral equation approach, though even the Gateaux derivative required more than a page of calculations, and some care with the submodel derivatives.
Note also that the influence function we arrived at is perfectly well-defined outside of the discrete setup, as long as the regression functions $\pi$ and $\mu$ are well-defined. \\

\end{document}